\newtheorem{theorem}{Theorem}
\newtheorem{lemma}[theorem]{Lemma}
\newtheorem{example}[theorem]{Example}
\newtheorem{remark}[theorem]{Remark}
\def\F{\mathbb{F}}
\def\Tr{\text{\rm Tr}}
\def\wt{\text{\rm wt}}
\def\Ker{\text{\rm Ker}}
\def\Supp{\text{\rm Supp}}
\def\Im{\text{\rm Im}}
\def\d{\textrm{d}}
\def\x{\textrm{x}}
\def\y{\textrm{y}}
\def\D{\textrm{D}}
\def\c{\textrm{c}}
\def\Prj{\textrm{Prj}}
\numberwithin{equation}{section}
\numberwithin{theorem}{section}
\begin{document}

\title[Weight distributions and weight hierarchies]{Weight distributions and weight hierarchies of a family of $p$-ary linear codes}
\markright{Weight hierarchies and weight distributions}
\author{Fei Li}
\address{Faculty of School of Statistics and Applied Mathematics,
Anhui University of Finance and Economics, Bengbu,  Anhui Province, 233030, China}
\email{cczxlf@163.com; 120110029@aufe.edu.cn}

\author{Xiumei Li}
\address{School of Mathematical Sciences, Qufu Normal University, Qufu Shandong, 273165, China}
\email{lxiumei2013@qfnu.edu.cn}

\begin{abstract}
The weight distribution and weight hierarchy of linear codes are two important research topics in coding theory. In this paper, by choosing proper defining sets from inhomogeneous quadratic functions over $\mathbb{F}_{q}^{2},$
we construct a family of $3$-weight $p$-ary linear codes and
determine their weight distributions and weight hierarchies.
Most of the codes can be used in secret sharing schemes.
\end{abstract}

\maketitle

\section{Introduction}
\label{intro}


For an odd prime number $p$ and a positive integer $e$, let $ \mathbb{F}_{q} $ be the finite field with $ q=p^{e} $
elements and $\mathbb{F}_{q}^{*}$ be its multiplicative group.

An $[n,k,d]$ $p$-ary linear code $C$ is a $k$-dimensional subspace of $ \mathbb{F}_{p}^{n} $ with minimum (Hamming) distance $d$.
For $0\leq i \leq n$, let $A_{i}=|\{\x\in C : \wt(\x)=i\}|$. Obviously, $A_{0}=1$.
The weight distribution of  $C$ is defined as the sequence $1,A_{1},\cdots,A_{n}$.
A code $C$ is said to be $t$-weight if $t=|\{i : 1\leq i \leq n,\ A_{i}\neq 0\}|$.
The weight distribution can give the minimum distance of the code. Moreover,
it allows the computation of the error probability of error detection and correction \cite{16K11}.

The weight distribution of linear codes is an important research topic in coding theory.
Recently, some researches devoted themselves to calculating weight distributions of linear codes \cite{5CK12,9DL16,DLM11,19LF08,23YY17}.
Linear codes of a few weights can be applied to secret sharing \cite{21YD06},
association schemes \cite{4CG84}, combinatorial designs \cite{O17}, authentication codes \cite{8DH07} and strongly regular graphs \cite{5CK86}.
There are some studies about linear codes with a few weights, for which the reader can refer to \cite{JL19,16KY19,26TX17,LL18,LY17,LC17} and reference therein.

The weight hierarchy of linear codes is another important research topic in coding theory \cite{1BL14,3CC97,11HP98,9HK92,14JL97,20WJ91,27WZ94,22YL15}.
We recall the definition of the generalized Hamming weights of linear codes \cite{20WJ91}. For an $[n,k,d]$ code $C$ and $1\leq r\leq k$,
denote by
$ [C,r]_{p} $ the set of all its $\mathbb{F}_{p}$-vector subspaces with dimension $r$.
For $ H \in [C,r]_{p}$, define $ \textrm{Supp}(H)=\cup_{c\in V}\textrm{Supp}(c)$, where $\textrm{Supp}(c)$ is the set of coordinates where $c$ is nonzero, that is,
$$ \textrm{Supp}(H)=\{i:1\leq i\leq n, c_i\neq 0 \ \ \textrm{for some $c=(c_{1}, c_{2}, \cdots , c_{n})\in H$}\}.$$
The $r$-th generalized Hamming weight (GHW) $ d_{r}(C)$ of $C$ is defined to be
$$
d_{r}(C)=\min\{|\Supp(H)|:H\in [C,r]_{p}\}, \ 1\leq r\leq k.
$$
It is easy to see that $ d_{1}(C)$ is the minimum distance $d$.
The weight hierarchy of $C$ is defined as the sequence $ (d_{1}(C),d_{2}(C),\cdots,d_{k}(C))$. About more detail, one can refer to \cite{12HP03}.
In recent years, there are some results about weight hierarchy of linear code, see \cite{2B19,13JF17,18LF17,19LW19,21XL16} and reference therein.

Let $\Tr$ denote the trace function from $\F_q$ onto $\F_p$ throughout this paper.
 For $ D= \{d_{1},d_{2},\cdots,d_{n}\}\subseteq \mathbb{F}_{q}^{\ast}$, A $p$-ary linear code $C_{D}$ of length $n$ is defined by
\begin{eqnarray}\label{defcode0}
         C_{D}=\{\left( \mathrm{Tr}(xd_1), \mathrm{Tr}(xd_2),\ldots, \mathrm{Tr}(xd_{n})\right):x\in \mathbb{F}_{p^{e}}\}.
\end{eqnarray}
Here $D$ is called the defining set of $C_{D}$.
It is a generic construction of linear codes,
which was proposed by Ding et al. \cite{5DJ15,6DD14}. Using this constructing method, many linear
codes with few weights were constructed \cite{25DD15,9DL16,19TXF17,23YY17,24ZL16}.

Motivated by the above construction, Li et al. \cite{LBY19} defined a p-ary linear code as follows. Recall that the ordinary inner product of vectors $\x=(x_1,x_2,\cdots,x_s),\ \y=(y_1,y_2,\cdots,y_s) \in \mathbb{F}_{q}^{s}$ is
$$
\x\cdot \y=x_{1}y_{1}+x_{2}y_{2}+\cdots+x_{s}y_{s}.
$$
A $p$-ary linear code $ C_{\D} $ with length $ n $ can be defined by
\begin{eqnarray}\label{defcode1}
C_{\D}=\{\left( \mathrm{Tr}(\x\cdot \d_1), \mathrm{Tr}(\x\cdot \d_2),\cdots, \mathrm{Tr}(\x\cdot \d_{n})\right):\x\in \mathbb{F}_{q}^{s}\},
\end{eqnarray}
where $ \D= \{\d_{1},\d_{2},\cdots,\d_{n}\}\subseteq \mathbb{F}_{q}^{s}\backslash\{(0,0,\cdots,0)\}$ is also called the defining set of $ C_{\D}$.
Using this method, some classes of linear codes with few weights have been constructed \cite{1AK17,JL19,LL20-1,LL20,LY17}.

Tang et al. \cite{19TXF17} constructed a $p$-ary linear code $C_D$ of \eqref{defcode0} with at most five nonzero weights from inhomogeneous quadratic function and their defining set is $D=\{x\in\F_q^*: f(x)-\Tr(\alpha x)=0\}$, where
$\alpha\in\mathbb{F}_{q}^{\ast}$ and $f(x)$ is a homogeneous quadratic function from $\F_q$ onto $\F_p$ defined by
\begin{eqnarray}\label{eq:f}
        f(x)=\sum_{i=0}^{e-1} \mathrm{Tr}(a_{i}x^{p^{i}+1}), \ a_{i}\in \mathbb{F}_{q}.
\end{eqnarray}


In this paper, inspired by the works of \cite{19LF17,LBY19,19TXF17}, we choose a defining set contained in $\mathbb{F}_{q}^{2} $ as follows.
For $\alpha\in\mathbb{F}_{q}^{\ast}$, set
\begin{align*}
\D=\D_{\alpha}&=\{(x,y)\in \mathbb{F}_{q}^{2}\setminus\{(0,0)\}: f(x)+\mathrm{Tr}(\alpha y)=0\}\\
&=\{\d_1,\d_2,\ldots,\d_n\},
\end{align*}
where $f(x)$ is defined in \eqref{eq:f} and non-degenerate.
So, the corresponding $p$-ary linear codes $C_{\D}$ of \eqref{defcode1} is
\begin{equation}\label{defcode}
C_{\D}=\Big\{\left( \Tr(\x\cdot \d_1), \Tr(\x\cdot \d_2),\ldots, \Tr(\x\cdot \d_{n})\right):\x\in \mathbb{F}_{q}^{2}\Big\}.
\end{equation}
We mainly determine their weight distributions and weight hierarchies.

The rest of the paper is arranged as follows.
Sec. 2  introduces some basic notations and results about quadratic forms useful in subsequent sections.
Sec. 3 shows the presented linear codes with three weights and determines their weight distributions and weight hierarchies.
Sec. 4 summarizes the paper.

\section{Preliminaries}\label{Pre}

In this section, we state some notations and basic facts on quadratic forms and $f$ defined in \eqref{eq:f}. These results will be used in the rest of the paper.

\subsection{Some notations fixed throughout this paper}
For Convenience, we fix the following notations. For some results on cyclotomic field $ \mathbb{Q}(\zeta_{p}) $, one can refer to \cite{IR90}.
\begin{itemize}
\item Let $\mathrm{Tr}$ be the trace function from $\mathbb{F}_{q}$ to $\mathbb{F}_{p}$.
Namely, for each $x\in \mathbb{F}_{q}$,
$$
\mathrm{Tr}(x)=x+x^{p}+ \cdots +x^{p^{e-1}}.
$$
\item $p^{\ast}=(-1)^{\frac{p-1}{2}}p$.
\item $\zeta_{p}=\exp(\frac{2\pi i}{p})$ is the primitive $p$-root of unity.
\item $\bar{\eta}$ is the quadratic character of $\mathbb{F}_{p}^{\ast}$.
It is extended by letting $\bar{\eta}(0)=0$.
\item Let $\mathbb{Z}$ be the rational integer ring and $\mathbb{Q}$ be the rational field. Let $\mathbb{K}$ be the cyclotomic
field $\mathbb{Q}(\zeta_{p})$. The field extension $\mathbb{K}/\mathbb{Q}$ is Galois of degree $p-1$.
The Galois group $Gal(\mathbb{K}/\mathbb{Q})=\{\sigma_{z}: z\in (\mathbb{Z}/p\mathbb{Z})^{\ast}\}$, where the automorphism
$\sigma_{z}$ is defined by $\sigma_{z}(\zeta_{p})=\zeta_{p}^{z}$.
\item $\sigma_{z}(\sqrt{p^{\ast}})=\bar{\eta}(z)\sqrt{p^{\ast}}$, for $1\leq z \leq p-1$.
\item Let $\langle\alpha_{1},\alpha_{2},\cdots,\alpha_{r}\rangle$ denote a space spanned by $\alpha_{1},\alpha_{2},\cdots,\alpha_{r}$.
\end{itemize}

\subsection{Quadratic form}

View $\mathbb{F}_{q}$ as a linear space over $\mathbb{F}_{p}$. Fixing $\upsilon_{1},\upsilon_{2},\cdots,\upsilon_{e} \in \mathbb{F}_{q} $ as its basis.
For any $x=x_{1}\upsilon_{1}+x_{2}\upsilon_{2}+\cdots+x_{e}\upsilon_{e} \in \mathbb{F}_{q}$ with $x_i\in\F_p, i=1,2,\cdots,e$,
there is a $\F_p$-linear isomorphism $\F_q\simeq\F_p^e$ defined as: $x=x_{1}\upsilon_{1}+x_{2}\upsilon_{2}+\cdots+x_{e}\upsilon_{e} \mapsto X=(x_1,x_2,\cdots,x_e)$,
where $X=(x_1,x_2,\cdots,x_e)$ is called the coordinate vector of $x$ under the basis $v_1,v_2,\cdots,v_e$ of $\F_q$.
A quadratic form $g$ over $\mathbb{F}_{q}$ with values in $\mathbb{F}_{p}$ can be represented by
$$
g(x)=g(X)=g(x_{1},x_{2},\cdots,x_{e})=\sum_{1\leq i,j\leq e}a_{ij}x_{i}x_{j},
$$
where $a_{ij}\in \mathbb{F}_{p}$ and $a_{ij}=a_{ji}$. We then associate with $g$ the $e\times e$ matrix $A$ whose $(i,j)$ entry is $a_{i,j}$. Then $g$ is given by $XAX^T$, where $X^T$ is the transposition of $X$.
Obviously, the matrix $A$ is symmetric and $R_g=\textrm{Rank}\ A$ called the rank of $g$, there exits an invertible matrix $M$ over $\F_p$ such that
$$
MAM^T=\textrm{diag}(\lambda_{1},\lambda_{2},\cdots,\lambda_{R_{g}},0,\cdots,0)
$$
is a diagonal matrix, where $\lambda_{1},\lambda_{2},\cdots,\lambda_{R_{g}}\in\F_p^*$.
Let $\Delta_{g}=\lambda_{1}\lambda_{2}\cdots\lambda_{R_{g}}$, and $\Delta_{g}=1$ if $R_{g}=0.$
We call $\bar{\eta}(\Delta_{g})$ the sign $\varepsilon_{g}$ of the quadratic form $g$.
It is an invariant under nonsingular linear transformations in matrix.

Let
\begin{equation}\label{eq:F}
    F(x,y)=\frac{1}{2}[g(x+y)-g(x)-g(y)].
\end{equation}
For an $r$-dimensional subspace $H$ of $\F_q$, its dual space $H^{\perp_g}$ is defined by
$$
H^{\perp_g}=\{x\in \mathbb{F}_{q}:\ F(x,y)=0, \ for \ any \  y \in H\}.
$$
Restricting the quadratic form $g$ to $H$,
it becomes a quadratic form denoted by $g|_{H}$ over $H$ in $r$ variables.
In this situation, we denote by $R_{H}$ and $\varepsilon_{H}$ the rank and sign of $g|_{H}$, respectively.

From now on, we suppose $R_{g}=e$, i.e., $g$ is a non-degenerate quadratic form. For $a\in \mathbb{F}_{p}$, set
\begin{eqnarray}\label{defcode3}
        \overline{D}_{a}=\{x\in \mathbb{F}_{q}|g(x)=a\}.
\end{eqnarray}
In the following sequel, we shall give some lemmas, which are essential in proving our main
results.

\begin{lemma}[{\cite[Proposition 1]{19LF17}}]\label{lem:1}
Let $g$ be a non-degenerate quadratic form and $H$ be a $r$-dimensional nonzero subspace of $\mathbb{F}_{q}$, then
$$
|H\cap \overline{D}_{a}|=\left\{\begin{array}{ll}
p^{r-1}+v(a)\overline{\eta}((-1)^{\frac{R_{H}}{2}})\varepsilon_{H}p^{r-\frac{R_{H}+2}{2}},  &\textrm{if\ } \ R_{H}\equiv0\pmod2, \\
p^{r-1}+\overline{\eta}((-1)^{\frac{R_{H}-1}{2}}a)\varepsilon_{H}p^{r-\frac{R_{H}+1}{2}},  &\textrm{if\ } \ R_{H}\equiv1\pmod2,
\end{array}
\right.
$$
where $v(a)=p-1$ if $a=0$, otherwise $v(a)=-1$.
\end{lemma}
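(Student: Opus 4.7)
The plan is to rewrite $|H\cap\overline{D}_a|$ as an exponential sum via additive-character orthogonality, evaluate the resulting quadratic Gauss sums on the subspace $H$, and then split cases on the parity of $R_H$. Starting from
\[
|H\cap\overline{D}_a| \;=\; \frac{1}{p}\sum_{x\in H}\sum_{t\in\F_p}\zeta_p^{t(g(x)-a)} \;=\; p^{r-1}+\frac{1}{p}\sum_{t=1}^{p-1}\zeta_p^{-ta}\,S(t),
\]
with $S(t)=\sum_{x\in H}\zeta_p^{tg(x)}$, the whole problem reduces to evaluating the subspace sum $S(t)$ for each nonzero $t$.

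To compute $S(t)$, I would diagonalize the restricted form. By the very definition of $R_H$ and $\varepsilon_H$, there is a basis of $H$ in which $g|_H=\lambda_1 y_1^2+\cdots+\lambda_{R_H} y_{R_H}^2$ with $\bar{\eta}(\lambda_1\cdots\lambda_{R_H})=\varepsilon_H$, while the remaining $r-R_H$ coordinates contribute a free factor $p^{r-R_H}$. Applying the classical one-variable identity $\sum_{y\in\F_p}\zeta_p^{cy^2}=\bar{\eta}(c)\sqrt{p^*}$ (valid for $c\neq 0$) to each of the $R_H$ squared variables then gives
\[
S(t) \;=\; p^{r-R_H}\,\bar{\eta}(t)^{R_H}\,\varepsilon_H\,(\sqrt{p^*})^{R_H}.
\]

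The last step is the case split on the parity of $R_H$. For $R_H$ even, $\bar{\eta}(t)^{R_H}=1$ and $(\sqrt{p^*})^{R_H}=\bar{\eta}((-1)^{R_H/2})\,p^{R_H/2}$, so $S(t)$ is independent of $t$; the remaining sum $\sum_{t\in\F_p^*}\zeta_p^{-ta}$ evaluates to $v(a)=p-1$ or $-1$ according as $a=0$ or $a\neq 0$, producing the first branch. For $R_H$ odd, a factor $\bar{\eta}(t)\sqrt{p^*}$ survives, and the quadratic Gauss sum evaluation $\sum_{t=1}^{p-1}\bar{\eta}(t)\zeta_p^{-ta}=\bar{\eta}(-a)\sqrt{p^*}$ (which vanishes when $a=0$), combined with $(\sqrt{p^*})^2=\bar{\eta}(-1)p$ and the multiplicativity $\bar{\eta}(-1)\bar{\eta}(-a)=\bar{\eta}(a)$, collapses everything into $\varepsilon_H\bar{\eta}((-1)^{(R_H-1)/2}a)\,p^{r-(R_H+1)/2}$, which is the second branch.

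The main obstacle is not a deep idea but careful bookkeeping of signs: one must keep straight that rescaling $g\mapsto tg$ multiplies the sign $\varepsilon_H$ by $\bar{\eta}(t)^{R_H}$, and one must correctly match $(\sqrt{p^*})^{R_H}$ against powers of $\bar{\eta}(-1)$ so that the final expression depends on $R_H\bmod 2$ in the stated way. Once these parity and sign conventions are handled consistently, the two branches fall out by direct substitution.
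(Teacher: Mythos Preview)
Your argument is correct and is the standard Gauss-sum computation for counting level sets of a quadratic form: orthogonality to isolate the condition $g(x)=a$, diagonalization of $g|_H$ to factor the character sum into one-variable quadratic Gauss sums, and then the parity split on $R_H$ to evaluate the remaining sum over $t\in\F_p^\ast$. The sign bookkeeping you describe (in particular $(\sqrt{p^\ast})^{2}=\bar\eta(-1)p$ and $\bar\eta(-1)\bar\eta(-a)=\bar\eta(a)$) is exactly what is needed, and your handling of the $a=0$ case in the odd branch via $\bar\eta(0)=0$ is consistent with the statement.

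Note that the present paper does \emph{not} supply its own proof of this lemma; it is quoted verbatim from \cite{19LF17} (Proposition~1 there). So there is no ``paper's proof'' to compare against here. Your write-up is essentially the natural proof one would expect in the cited source, and indeed the same computation appears (in the full-space case $H=\F_q$) behind Lemma~\ref{lem:6} and in the classical references \cite[Theorems~6.26--6.27]{16LN97}. Nothing further is required.
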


\begin{lemma}[{\cite[Proposition 2]{19LF17}}]\label{lem:2}
Let $g$ be a non-degenerate quadratic form. For each $ r\ (0<2r < e)$, there exist an $r$-dimensional subspace
$H\subseteq \mathbb{F}_{p^{e}}(e>2)$ such that $H\subseteq H^{\perp_g}$.
\end{lemma}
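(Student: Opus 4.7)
The plan is to argue by induction on $r$, extending a given $r$-dimensional totally isotropic subspace one dimension at a time by replacing the ambient space with an appropriate quotient on which $g$ still lives as a non-degenerate quadratic form of strictly smaller rank. The structural fact driving the argument is that for the polar form $F$ in \eqref{eq:F} one has $F(h,h)=g(h)$, so every $h\in H\subseteq H^{\perp_g}$ automatically satisfies $g(h)=0$; this is what allows $g$ to descend to quotients by totally isotropic subspaces.

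For the base case $r=1$, I would apply Lemma~\ref{lem:1} to $H=\mathbb{F}_{q}$ (so $R_H=e$) with $a=0$. When $e$ is odd the factor $\overline{\eta}(0)$ vanishes and the count becomes $|\overline{D}_{0}|=p^{e-1}$; when $e$ is even the count is bounded below by $p^{e-1}-(p-1)p^{(e-2)/2}$. In either case, since $e>2$ there are strictly more than one solution of $g(x)=0$, so a nonzero isotropic vector $v$ exists and $\langle v\rangle$ serves as the desired $1$-dimensional subspace.

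For the inductive step, suppose an $r$-dimensional totally isotropic subspace $H$ has been constructed and $2(r+1)<e$. Using non-degeneracy of $g$, I would first record that $\dim H^{\perp_g}=e-r$ and $(H^{\perp_g})^{\perp_g}=H$, then form the quotient $V=H^{\perp_g}/H$ of dimension $e-2r$. The opening observation shows $g$ descends to a well-defined quadratic form $\bar{g}$ on $V$, and the equality $(H^{\perp_g})^{\perp_g}=H$ shows $\bar{g}$ is non-degenerate. Since $\dim V=e-2r>2$, the base case applied to $\bar{g}$ yields a nonzero isotropic class in $V$, which lifts to some $v\in H^{\perp_g}\setminus H$ with $g(v)=0$. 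Then $H'=H+\langle v\rangle$ has dimension $r+1$, and a direct bilinear expansion using $F(v,v)=g(v)=0$, $F(v,h)=0$ for $h\in H$, and $F(h,h')=0$ for $h,h'\in H$ confirms $H'\subseteq (H')^{\perp_g}$.

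The main obstacle is the simultaneous verification that $\bar{g}$ is well-defined and non-degenerate on $V$, since this is exactly what makes the induction self-sustaining. Both properties rest on identifying the radical of $F$ restricted to $H^{\perp_g}$ with $(H^{\perp_g})^{\perp_g}=H$, which uses only non-degeneracy of $g$; once this is in place, the dimension count $\dim V=e-2r>2$ matches the hypothesis of the base case exactly, and the remaining computations are routine.
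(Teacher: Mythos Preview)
The paper does not supply its own proof of this lemma; it is quoted without argument from \cite[Proposition~2]{19LF17}, so there is nothing to compare against directly.

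Your argument is the standard Witt-style induction and is correct. The identity $F(h,h)=g(h)$ is valid for odd $p$, the dimension formula $\dim H^{\perp_g}=e-r$ and the double-orthogonal identity $(H^{\perp_g})^{\perp_g}=H$ follow from non-degeneracy of $g$, and the verification that $\bar g$ is well-defined and non-degenerate on $V=H^{\perp_g}/H$ is exactly as you describe. One small wrinkle: in the inductive step you apply ``the base case'' to $\bar g$ on $V$, but Lemma~\ref{lem:1} as stated in this paper speaks only of the fixed form $g$ on $\mathbb{F}_q$ and its restrictions to subspaces, not of an abstract quadratic form on a quotient. This is harmless because the formula in Lemma~\ref{lem:1} depends only on the dimension, rank, and sign of the form in question (equivalently, it is the classical solution count for quadratic forms over $\mathbb{F}_p$; see \cite[Theorems~6.26--6.27]{16LN97}, already used elsewhere in the paper), and so transfers verbatim to $\bar g$ on the $(e-2r)$-dimensional space $V$. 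With that understood, your induction closes: $\dim V=e-2r>2$ exactly when $2(r+1)<e$, which is precisely the range in which the step is needed.
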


\begin{lemma}[{\cite[Proposition 3]{19LF17}}]\label{lem:3}
Let $g$ be a non-degenerate quadratic form and $ e=2s>2$. There exists an $s$-dimensional subspace $H_{s}\subset \mathbb{F}_{p^{e}}$
such that $H_{s}=H^{\perp_g}_{s}$ if and only if $\varepsilon_{g}=(-1)^{\frac{e(p-1)}{4}}$.
\end{lemma}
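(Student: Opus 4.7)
The plan is to use Lemma~\ref{lem:2} to extract a ``submaximal'' isotropic subspace, pass to a $2$-dimensional quotient where the obstruction to further extension becomes visible, and then match the isotropy condition on this quotient with the sign of $g$.

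First I would unpack the self-duality condition. Expanding \eqref{eq:F} at $(x,x)$ yields $F(x,x)=g(x)$, so $H_s\subseteq H_s^{\perp_g}$ already forces $g|_{H_s}\equiv 0$; combined with the dimension identity $\dim H^{\perp_g}=e-\dim H$ for non-degenerate $g$, the equality $H_s=H_s^{\perp_g}$ is equivalent to $H_s$ being a totally isotropic subspace of maximal dimension $s=e/2$.

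Next I would apply Lemma~\ref{lem:2} with $r=s-1$ (permitted since $0<2(s-1)<e$) to produce an $(s-1)$-dimensional totally isotropic $H\subset\F_q$. Since $\dim H^{\perp_g}=s+1$, the form $g$ descends to a non-degenerate quadratic form $\bar g$ on the $2$-dimensional quotient $W=H^{\perp_g}/H$, and extending $H$ to a maximal isotropic $H_s$ amounts to producing an isotropic line in $(W,\bar g)$. After diagonalising, a binary non-degenerate form $au^2+bv^2$ admits an isotropic line iff $-ab$ is a square in $\F_p$, equivalently iff $\varepsilon_{\bar g}=\bar\eta(ab)=\bar\eta(-1)$. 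Conversely, if some maximal isotropic subspace exists at all, Witt's extension theorem moves $H$ inside any such subspace, so the question really is intrinsic to $g$ and does not depend on the auxiliary choice.

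Finally, a Witt-style orthogonal decomposition $\F_q=(H\oplus H')\perp K$, in which $H\oplus H'$ is a hyperbolic $2(s-1)$-space and $K\simeq W$, gives $\Delta_g\equiv (-1)^{s-1}\Delta_{\bar g}\pmod{(\F_p^{*})^{2}}$, because each hyperbolic plane $xy=\tfrac14((x+y)^2-(x-y)^2)$ contributes diagonal entries of opposite sign and hence a factor $-1$ to the discriminant modulo squares. Therefore $\varepsilon_g=\bar\eta(-1)^{s-1}\varepsilon_{\bar g}$, and substituting $\varepsilon_{\bar g}=\bar\eta(-1)$ from the isotropy criterion yields $\varepsilon_g=\bar\eta(-1)^s=(-1)^{s(p-1)/2}=(-1)^{e(p-1)/4}$, delivering the equivalence in both directions. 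The main obstacle is the Witt-theoretic bookkeeping, namely verifying the orthogonal decomposition and confirming that the existence of a maximal isotropic subspace is invariant under change of the auxiliary $H$; once that is in place, the sign arithmetic is a short finite-field computation.
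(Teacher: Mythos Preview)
The paper does not supply a proof of this lemma; it is quoted verbatim as Proposition~3 of \cite{19LF17} and used as a black box. There is therefore no in-paper argument to compare against.

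Your proof is correct and self-contained. Reducing via Lemma~\ref{lem:2} to the $2$-dimensional non-degenerate quotient $W=H^{\perp_g}/H$, reading off the isotropy criterion $\varepsilon_{\bar g}=\bar\eta(-1)$ there, and propagating the discriminant through the $s-1$ hyperbolic planes in the Witt decomposition is the standard route over finite fields of odd characteristic; the final arithmetic $\bar\eta(-1)^{s}=(-1)^{s(p-1)/2}=(-1)^{e(p-1)/4}$ is exactly what is required. One minor remark: your appeal to Witt's extension theorem, while valid, is not strictly necessary for the independence-of-$H$ step, since the discriminant relation $\varepsilon_g=\bar\eta(-1)^{s-1}\varepsilon_{\bar g}$ you derive already shows that $\varepsilon_{\bar g}$ is determined by $\varepsilon_g$ alone, hence does not depend on which $(s-1)$-dimensional totally isotropic subspace was chosen.
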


\begin{lemma}[{\cite[Theorem 1]{19LF17}}]\label{lem:4}
Let $g$ be a non-degenerate quadratic form and $a$ be a non-element in $\F_p$. If $e(e > 2)$ is even, then for the linear codes $C_{\overline{D}_{a}}$ in \eqref{defcode0}
 with defining sets $\overline{D}_{a}$ defined in \eqref{defcode3}, we have
$$
d_{r}(C_{\overline{D}_{a}})=\left\{\begin{array}{ll}
p^{e-1}-p^{e-r-1}-((-1)^{\frac{e(p-1)}{4}}\varepsilon_{g}+1)p^{\frac{e-2}{2}}, & \textrm{if\ } \ 1\leq r\leq \frac{e}{2}, \\
p^{e-1}-2p^{e-r-1}-(-1)^{\frac{e(p-1)}{4}}\varepsilon_{g}p^{\frac{e-2}{2}}, & \textrm{if\ } \ \frac{e}{2}\leq r< e, \\
p^{e-1}-(-1)^{\frac{e(p-1)}{4}}\varepsilon_{g}p^{\frac{e-2}{2}}, & \textrm{if\ } \  r= e.
\end{array}
\right.
$$
\end{lemma}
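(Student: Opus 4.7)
The plan is to translate the computation of $d_r(C_{\overline{D}_a})$ into a maximization of $|H \cap \overline{D}_a|$ over subspaces $H \leq \F_q$ of appropriate dimension, then to evaluate that maximum using Lemma \ref{lem:1}. The map $\F_q \to C_{\overline{D}_a}$, $x \mapsto (\Tr(xd))_{d\in\overline{D}_a}$, is an $\F_p$-linear isomorphism for $e > 2$, so every $r$-dimensional subcode arises from an $r$-dimensional $W \leq \F_q$ with support $\overline{D}_a \setminus W^\perp$ under the trace pairing. Hence
\[
d_r(C_{\overline{D}_a}) \;=\; n \;-\; \max_{\dim H = e - r}\,|H \cap \overline{D}_a|,
\]
where $n = |\overline{D}_a|$ and $H$ ranges over $(e-r)$-dimensional $\F_p$-subspaces of $\F_q$.

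Computing $n$ via Lemma \ref{lem:1} with $H = \F_q$ (so $R_H = e$ is even and $v(a) = -1$) and using $\bar\eta((-1)^{e/2}) = (-1)^{e(p-1)/4}$ gives $n = p^{e-1} - (-1)^{e(p-1)/4}\varepsilon_g p^{(e-2)/2}$. Writing $r' = e - r$ and reapplying Lemma \ref{lem:1} to a general $r'$-dimensional $H$, the intersection size equals $p^{r'-1}$ plus a correction whose magnitude decreases with $R_H$ and whose sign is determined by $\varepsilon_H$. A case analysis over the admissible $R_H$ shows that the maximum intersection is $2p^{r'-1}$ when $r' \leq e/2$ (achieved at $R_H = 1$), and $p^{r'-1} + p^{(e-2)/2}$ when $r' \geq e/2$ (achieved at $R_H = 2r' - e$ or $R_H = 2r' - e + 1$); the two expressions agree at $r' = e/2$.

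Achievability rests on Lemmas \ref{lem:2} and \ref{lem:3}. For $r' \leq e/2$, I would take $H = K \oplus \langle v\rangle$ with $K$ an $(r'-1)$-dimensional totally isotropic subspace furnished by Lemma \ref{lem:2} and $v \in K^{\perp_g}$ chosen so that $\bar\eta(a\,g(v)) = +1$; both quadratic characters of $g(v)$ are realizable because the induced form on $K^{\perp_g}/K$ is non-degenerate of dimension at least two. For $r' \geq e/2$, I would pick a totally isotropic $K$ of dimension $e - r' - 1$ and choose $H$ with $K \subseteq H \subseteq K^{\perp_g}$ of dimension $r'$; varying the hyperplane $H/K$ inside the non-degenerate quadratic space $K^{\perp_g}/K$ controls $\varepsilon_H$ and delivers the favorable sign on the correction term. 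The final case $r = e$ is immediate, since $\{0\} \cap \overline{D}_a = \emptyset$, whence $d_e = n$. Assembling these maxima with the value of $n$ reproduces the three-case formula of the lemma.

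The principal subtlety is the regime $r' \geq e/2$: the even-rank candidate $H = L^{\perp_g}$ for a totally isotropic $L$ of dimension $e - r'$ has $\varepsilon_H$ rigidly determined from $\varepsilon_g$ by a Witt-decomposition argument and may fail to produce the favorable sign on its own, which is precisely why the odd-rank construction is needed to unlock the $+p^{(e-2)/2}$ contribution unconditionally. The resulting uniformity of the maximum in $\varepsilon_g$ is what forces the $\varepsilon_g$-dependence of $d_r$ in the range $1 \leq r \leq e/2$ to enter solely through $n$, yielding the combined correction $((-1)^{e(p-1)/4}\varepsilon_g + 1)p^{(e-2)/2}$ that distinguishes this range from the range $e/2 \leq r < e$.
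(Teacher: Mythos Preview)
The paper does not supply its own proof of this lemma: it is quoted verbatim from \cite[Theorem~1]{19LF17} and used as a black box (notably in Case~3 of Theorem~\ref{thm:wh}). So there is no in-paper argument to compare against.

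Your outline is nonetheless the natural one and is consistent with how the surrounding lemmas from \cite{19LF17} are meant to be deployed. The reduction
\[
d_r(C_{\overline{D}_a}) \;=\; n - \max_{\dim H = e-r} |H \cap \overline{D}_a|
\]
is exactly the mechanism the present paper invokes in \eqref{eq:d_r:2}, and evaluating $n$ and the pointwise intersection sizes via Lemma~\ref{lem:1} is correct. Your identification of the optimal ranks ($R_H=1$ for $r' \le e/2$, and $R_H \in \{2r'-e,\,2r'-e+1\}$ for $r' \ge e/2$) matches the formulas in the statement after substitution, and the achievability constructions you sketch via Lemmas~\ref{lem:2}--\ref{lem:3} are the right tools.

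Two points deserve tightening if you write this up in full. First, the injectivity of $x \mapsto (\Tr(xd))_{d\in\overline{D}_a}$ needs a one-line justification (e.g.\ $|\overline{D}_a| > p^{e-1}-p^{(e-2)/2} \ge p^{e-2}$ forces $\overline{D}_a$ not to lie in any trace-hyperplane when $e>2$). Second, in the regime $r' \ge e/2$ your claim that ``varying the hyperplane $H/K$ \ldots\ controls $\varepsilon_H$'' is true but not automatic: one must check that the induced non-degenerate form on $K^{\perp_g}/K$ (of dimension $2r'-e+2 \ge 2$) represents both square classes, so that an anisotropic line of either type can be split off. This is standard, but it is precisely the step that makes the odd-rank construction succeed when, as you note, the even-rank candidate $H = L^{\perp_g}$ has its sign $\varepsilon_H$ pinned by $\varepsilon_g$.
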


\begin{lemma}[{\cite[Theorem 2]{19LF17}}]\label{lem:5}
Let $g$ be a non-degenerate quadratic form and $a$ be a non-element in $\F_p$. If $\bar{\eta}(a)=(-1)^{\frac{(e-1)(p-1)}{4}}\varepsilon_{g}$ and $e\ (e\geq3)$ is odd, then
for the linear codes $C_{\overline{D}_{a}}$ in \eqref{defcode0} with defining sets $\overline{D}_{a}$ defined in \eqref{defcode3}, we have
$$
d_{r}(C_{\overline{D}_{a}})=\left\{\begin{array}{ll}
p^{e-1}-p^{e-r-1}, & \textrm{if\ } \ 1\leq r< \frac{e}{2}, \\
p^{e-1}+p^{\frac{e-1}{2}}-2p^{e-r-1}, & \textrm{if\ } \ \frac{e}{2}< r< e, \\
p^{e-1}+p^{\frac{e-1}{2}}, & \textrm{if\ } \ r= e.
\end{array}
\right.
$$
\end{lemma}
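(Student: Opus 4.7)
The plan is to translate the computation of $d_{r}(C_{\overline{D}_{a}})$ into a maximisation of $|K\cap \overline{D}_{a}|$ over $(e-r)$-dimensional subspaces $K\subseteq \F_q$, and then to invoke Lemma~\ref{lem:1} to reduce that maximisation to a search over pairs $(R_K,\varepsilon_K)$. Concretely, parametrise $C_{\overline{D}_{a}}$ by $\F_q$ via $x\mapsto c_x=(\Tr(xd))_{d\in \overline{D}_{a}}$; under the non-degeneracy of $g$ this is an $\F_p$-linear isomorphism, so an $r$-dimensional subcode $H$ corresponds to a unique $r$-dimensional $H^{*}\subseteq \F_q$. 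Setting $K=\{z\in \F_q:\Tr(zy)=0\ \forall y\in H^{*}\}$ (so $\dim K=e-r$) and unwinding the definition of the support yields
\[
|\Supp(H)|=n-|K\cap \overline{D}_{a}|,\qquad d_{r}(C_{\overline{D}_{a}})=n-\max_{\dim K=e-r}|K\cap \overline{D}_{a}|.
\]

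Next I would specialise Lemma~\ref{lem:1} to $K=\F_q$ (so $R_K=e$, odd) to compute $n$. Using $\bar{\eta}((-1)^{(e-1)/2})=(-1)^{(e-1)(p-1)/4}$ together with the hypothesis $\bar{\eta}(a)\varepsilon_{g}=(-1)^{(e-1)(p-1)/4}$, the sign factor collapses to $+1$ and $n=p^{e-1}+p^{(e-1)/2}$; this instantly settles $r=e$ via $K=\{0\}$. For the remaining values of $r$ I fix a Witt decomposition $\F_q\cong H_{1}\perp\cdots\perp H_{(e-1)/2}\perp \langle w\rangle$ with $g(w)=\mu$; the maximal totally isotropic subspace $U=\langle u_1,\ldots,u_{(e-1)/2}\rangle$ is provided by Lemma~\ref{lem:2}, and the hypothesis is equivalent to $\bar{\eta}(\mu)=\bar{\eta}(a)$.

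In the easy range $1\leq s:=e-r\leq (e-1)/2$ (i.e.\ $r>e/2$) I would take $K=U_{s-1}\oplus\langle w\rangle$ with $U_{s-1}\subseteq U$ of dimension $s-1$: then $g|_{K}(u+cw)=\mu c^{2}$ has rank $1$ with $\varepsilon_{K}=\bar{\eta}(\mu)=\bar{\eta}(a)$, and Lemma~\ref{lem:1} immediately gives $|K\cap \overline{D}_{a}|=2p^{s-1}$. Optimality is read off from Lemma~\ref{lem:1}: the correction $|K\cap \overline{D}_{a}|-p^{s-1}$ is at most $p^{s-1-\lfloor R_K/2\rfloor}$ in absolute value, and the strict inequality $2p^{s-1}>p^{s-1}+p^{s-2}$ rules out all ranks $R_K\geq 2$. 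Hence $d_{r}=p^{e-1}+p^{(e-1)/2}-2p^{e-r-1}$.

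The crux is the range $(e+1)/2\leq s\leq e-1$ (i.e.\ $r<e/2$). Writing $K=\operatorname{rad}(g|_{K})\oplus K'$ with $K'$ non-degenerate of dimension $R_K$ forces $\operatorname{rad}(g|_{K})$ to be a totally isotropic subspace of the non-degenerate space $(K')^{\perp_{g}}$ of dimension $e-R_K$. A Witt-index bookkeeping, splitting by the parity of $R_K$ and using $\varepsilon_{K'}\cdot \varepsilon_{(K')^{\perp_{g}}}=\varepsilon_{g}$ to detect when $(K')^{\perp_{g}}$ is hyperbolic, shows that the minimum achievable rank is $R_K=2s-e$ (odd), attained precisely when $\varepsilon_{K'}=\varepsilon_{g}(-1)^{(e-s)(p-1)/2}$. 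Inserting $R_K=2s-e$ and $m=s-(e+1)/2$ into Lemma~\ref{lem:1} produces a correction of the form $\bar{\eta}((-1)^{m}a)\varepsilon_{K}\,p^{(e-1)/2}$, and an explicit $\pm 1$ computation (using $\varepsilon_{K}=\varepsilon_{K'}$, the formula above, and the hypothesis) collapses the sign to $+1$. Thus $|K\cap \overline{D}_{a}|=p^{s-1}+p^{(e-1)/2}$ is realised, and since every larger $R_K$ shrinks the correction to at most $p^{(e-3)/2}$ this is the maximum, giving $d_{r}=p^{e-1}-p^{e-r-1}$. The main obstacle is exactly this last paragraph: proving the Witt-theoretic lower bound on $R_K$ together with the hyperbolicity construction of $K'$, and then tracking every sign through Lemma~\ref{lem:1} to confirm that the hypothesis $\bar{\eta}(a)\varepsilon_{g}=(-1)^{(e-1)(p-1)/4}$ is exactly what forces the correction term to be positive.
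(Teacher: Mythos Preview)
The paper does not prove Lemma~\ref{lem:5}; it is quoted verbatim from \cite[Theorem~2]{19LF17} and used later as a black box inside the proof of Theorem~\ref{thm:wh}. So there is no ``paper's own proof'' to compare against.

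Your argument is correct and is precisely the natural one in the framework the paper sets up. The reduction
\[
d_{r}(C_{\overline{D}_{a}})=n-\max_{\dim K=e-r}|K\cap \overline{D}_{a}|
\]
is exactly formula~\eqref{eq:d_r:2} specialised to $C_{\overline{D}_{a}}$, after which everything is controlled by Lemma~\ref{lem:1}. Your computation of $n$, the rank-$1$ construction for $s\le (e-1)/2$, and the Witt-bookkeeping showing $R_K\ge 2s-e$ (with equality forcing $(K')^{\perp_g}$ hyperbolic and hence pinning $\varepsilon_{K'}$) are all sound; I checked the sign computation at $R_K=2s-e$ and the exponent of $-1$ does collapse to $\frac{(e-1)(p-1)}{2}\equiv 0\pmod 2$, so the correction term is indeed $+p^{(e-1)/2}$ under the hypothesis on $\bar\eta(a)$.

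Two small points worth tightening. First, the injectivity of $x\mapsto c_x$ does not follow from the non-degeneracy of $g$ per se, but from $|\overline{D}_{a}|=p^{e-1}+p^{(e-1)/2}>p^{e-1}$, which forces $\overline{D}_{a}$ to span $\F_q$; you should say this explicitly. Second, in the range $s\le (e-1)/2$ your optimality sentence only rules out $R_K\ge 2$; you should also remark that $R_K=0$ gives $|K\cap\overline{D}_{a}|=0$ (from the $v(a)=-1$ branch of Lemma~\ref{lem:1}), so $R_K=1$ really is best. Neither of these is a genuine gap.
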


\subsection{Some auxiliary results}
In this following sequel, we present some auxiliary results about $f$ defined in \eqref{eq:f}, which will play important roles in solving the weight distributions and weight hierarchies.
For more detail, one can refer to \cite{19TXF17}.

 For any $x\in\F_q$, $x$ can be uniquely expressed as $x=x_{1}\upsilon_{1}+x_{2}\upsilon_{2}+\cdots+x_{e}\upsilon_{e}$ with $x_i\in\F_p$.
Hence, we have
\begin{align*}
        f(x)&=\sum_{i=0}^{e-1} \mathrm{Tr}(a_{i}x^{p^{i}+1})=\sum_{i=0}^{e-1} \Tr\Big(a_{i}\Big(\sum_{j=1}^{e}x_j\upsilon_j\Big)^{p^{i}+1}\Big)\\
&=\sum_{i=0}^{e-1} \Tr\Big(a_{i}\Big(\sum_{j=1}^{e}x_j\upsilon_j^{p^i}\Big)\Big(\sum_{k=1}^{e}x_k\upsilon_k\Big)\Big)\\
&=\sum_{j=1}^{e}\sum_{k=1}^{e}\Big(\sum_{i=0}^{e-1}\Tr(a_iv_j^{p^i}v_k)\Big)x_jx_k=XBX^T,
\end{align*}
where $X=(x_1,x_2,\cdots,x_e)$ is the coordinate vector of $x$ under the basis $v_1,v_2,\cdots,v_e$ of $\F_q$ and $B=\Big(\frac{1}{2}\sum_{i=0}^{e-1}(\Tr(a_i(v_j^{p^i}v_k+v_jv_k^{p^i}))\Big)_{e\times e}$. Thus, $f$ is a quadratic form. For any $x,y\in\F_q$, by definition of $F(x,y)$ in \eqref{eq:F}, we have
\begin{align*}
F(x,y)&=\frac{1}{2}[f(x+y)-f(x)-f(y)]=\frac{1}{2}\sum_{i=0}^{e-1}\Tr(a_i(x^{p^i}y+xy^{p^i})\\
&=\frac{1}{2}\sum_{i=0}^{e-1}\Tr(a_ix^{p^i}y)+\frac{1}{2}\sum_{i=0}^{e-1}\Tr((a_{i}^{p^{-i}}x^{p^{-i}}y)^{p^i})\\
&=\frac{1}{2}\sum_{i=0}^{e-1}\Tr(a_ix^{p^i}y)+\frac{1}{2}\sum_{i=0}^{e-1}\Tr(a_{i}^{p^{e-i}}x^{p^{e-i}}y)\\
&=\Tr\Big(\Big(a_0x+\frac{1}{2}\sum_{i=1}^{e-1}(a_i+a_{e-i}^{p^i})x^{p^i}\Big)y\Big)\\
&=\mathrm{Tr}(yL_{f}(x)),
\end{align*}
where $L_{f}$ is a linearized polynomial over $\mathbb{F}_q$
defined as
\begin{eqnarray}\label{eq:L_f}
        L_{f}(x)=a_0x+\frac{1}{2}\sum_{i=1}^{e-1}(a_i+a_{e-i}^{p^i})x^{p^i}.
\end{eqnarray}
Let $\mathrm{Im}(L_{f})=\{L_{f}(x):x\in\F_q\},\ \Ker(L_{f})=\{x\in\F_q:L_{f}(x)=0\}$ denote the image and kernel of $L_{f}$, respectively.
If $L_{f}(a)=-\frac{b}{2},$ we denote $a$ by $x_{b}$.

The following two lemmas  are essential in proving our main results.

\begin{lemma}[{\cite[Lemma 5]{19TXF17}}]\label{lem:6}
 Let the symbols and notations be as above and $f$ defined in \eqref{eq:f} and $b\in \mathbb{F}_{q}$. Then
\begin{enumerate}
\item \ \ $\sum\limits_{x\in \mathbb{F}_{q}}\zeta_{p}^{f(x)}=\varepsilon_{f}q(p^{\ast})^{-\frac{R_{f}}{2}}$.
\item \ \
$
\sum\limits_{x\in \mathbb{F}_{q}}\zeta_{p}^{f(x)-\mathrm{Tr}(bx)}
=\left\{\begin{array}{ll}
0, & \textrm{if\ } b\notin \mathrm{Im}(L_{f}), \\
\varepsilon_{f}q(p^{\ast})^{-\frac{R_{f}}{2}}\zeta_{p}^{-f(x_{b})},  & \textrm{if\ } b\in \mathrm{Im}(L_{f}).
\end{array}
\right.
$
\end{enumerate}
where $x_{b}$ satisfies $L_{f}(x_{b})=-\frac{b}{2}$.
\end{lemma}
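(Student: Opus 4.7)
The plan is to reduce both parts to the classical one-dimensional quadratic Gauss sum $\sum_{y\in\F_p}\zeta_p^{\lambda y^2}=\bar\eta(\lambda)\sqrt{p^*}$, together with a completing-the-square identity coming from $L_f$.

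First I would prove part (1) by diagonalising $f$. Writing $f(x)=XBX^T$ in the fixed $\F_p$-basis of $\F_q$ and choosing $M\in\mathrm{GL}_e(\F_p)$ with $MBM^T=\mathrm{diag}(\lambda_1,\ldots,\lambda_{R_f},0,\ldots,0)$, the substitution $X=YM$ is a bijection on $\F_p^e$ and turns the sum into a product of one-dimensional Gauss sums:
\begin{align*}
\sum_{x\in\F_q}\zeta_p^{f(x)}=p^{e-R_f}\prod_{i=1}^{R_f}\bar\eta(\lambda_i)\sqrt{p^*}=p^{e-R_f}\varepsilon_f(\sqrt{p^*})^{R_f},
\end{align*}
since $\prod_{i=1}^{R_f}\bar\eta(\lambda_i)=\bar\eta(\Delta_f)=\varepsilon_f$. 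Rewriting this via $p^*=(-1)^{(p-1)/2}p$ yields the stated form $\varepsilon_f q(p^*)^{-R_f/2}$.

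For part (2), I would use two simple facts about $L_f$ derived from the identity $F(x,y)=\Tr(yL_f(x))=\Tr(xL_f(y))$ already established in the paper: $L_f$ is self-adjoint for the trace pairing, whence $\mathrm{Im}(L_f)=\Ker(L_f)^{\perp}$; and every $y_0\in\Ker(L_f)$ lies in the radical of $F$, so $f(y_0)=F(y_0,y_0)=0$. If $b\notin\mathrm{Im}(L_f)$, I would pick $y_0\in\Ker(L_f)$ with $\Tr(by_0)\neq0$; the substitution $x\mapsto x+y_0$ leaves the sum invariant yet multiplies it by $\zeta_p^{f(y_0)-\Tr(by_0)}=\zeta_p^{-\Tr(by_0)}\neq1$, forcing it to vanish. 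If $b\in\mathrm{Im}(L_f)$ and $x_b$ satisfies $L_f(x_b)=-\tfrac{b}{2}$, the polar identity
\begin{align*}
f(x+x_b)=f(x)+2F(x,x_b)+f(x_b)=f(x)-\Tr(bx)+f(x_b)
\end{align*}
completes the square to $f(x)-\Tr(bx)=f(x+x_b)-f(x_b)$; shifting the summation variable then reduces the sum to $\zeta_p^{-f(x_b)}$ times the sum in part~(1), giving the claimed formula.

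The main technical nuisance is the careful bookkeeping of signs in (1): $(\sqrt{p^*})^{R_f}$ and $(p^*)^{-R_f/2}$ must be interpreted with a fixed embedding of the Gauss sum into $\mathbb{K}=\mathbb{Q}(\zeta_p)$, so that the equality $p^{e-R_f}(\sqrt{p^*})^{R_f}=q(p^*)^{-R_f/2}$ is read on the nose. Once that convention is fixed, both parts reduce to standard manipulations of additive characters and linear substitutions.
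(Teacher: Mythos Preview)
The paper does not supply its own proof of this lemma; it is quoted verbatim from \cite{19TXF17}, so there is nothing in the present text to compare your argument against. Your proof is the standard one and is correct: diagonalise the quadratic form over $\F_p$ to reduce (1) to a product of one-variable quadratic Gauss sums $\sum_{y\in\F_p}\zeta_p^{\lambda y^2}=\bar\eta(\lambda)\sqrt{p^*}$, and for (2) exploit the self-adjointness $\Tr(yL_f(x))=\Tr(xL_f(y))$ to identify $\mathrm{Im}(L_f)=\Ker(L_f)^{\perp}$ under the trace pairing, then either shift by a kernel element (forcing the sum to vanish) or complete the square via the polar identity. Both steps are exactly what one finds in the original reference. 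The sign bookkeeping you flag is the only genuinely delicate point: one must fix once and for all $\sqrt{p^{\ast}}$ to be the quadratic Gauss sum and read $(p^{\ast})^{-R_f/2}$ as $(\sqrt{p^{\ast}})^{-R_f}$, so that the identity in (1) is an equality of specific algebraic numbers in $\mathbb{Q}(\zeta_p)$ rather than merely up to sign.
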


\begin{lemma}[{\cite[Lemma 4]{19TXF17}}]\label{lem:7}
 With the symbols and notations above, we have the following.
\begin{enumerate}
\item \ \ $
\sum\limits_{y\in \mathbb{F}_{p}^{\ast}}\sigma_{y}((p^{\ast})^{-\frac{r}{2}})
=\left\{\begin{array}{ll}
0, & \textrm{if\ } r  \ \textrm{is odd\ }, \\
(p^{\ast})^{-\frac{r}{2}}(p-1),  & \textrm{if\ } r \ \textrm{is even\ }.
\end{array}
\right.
$
\item \ \ For any $z\in \mathbb{F}_{p}^{\ast}$, then
$$
\sum\limits_{y\in \mathbb{F}_{p}^{\ast}}\sigma_{y}((p^{\ast})^{-\frac{r}{2}}\zeta_{p}^{z})
=\left\{\begin{array}{ll}
\bar{\eta}(z)(p^{\ast})^{-\frac{r-1}{2}}, & \textrm{if\ } r  \ \textrm{is odd\ }, \\
-(p^{\ast})^{-\frac{r}{2}},  & \textrm{if\ } r \ \textrm{is even\ }.
\end{array}
\right.
$$
\end{enumerate}
\end{lemma}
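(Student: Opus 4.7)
The plan is to reduce everything to the single identity $\sigma_y(\sqrt{p^\ast}) = \bar{\eta}(y)\sqrt{p^\ast}$ recorded in the preliminaries. By multiplicativity this immediately yields $\sigma_y\bigl((p^\ast)^{-r/2}\bigr) = \bar{\eta}(y)^r (p^\ast)^{-r/2}$, and since $\bar{\eta}(y)\in\{\pm 1\}$ the factor $\bar{\eta}(y)^r$ depends only on the parity of $r$. Combined with $\sigma_y(\zeta_p^z) = \zeta_p^{yz}$, this turns every sum appearing in the lemma into an elementary character sum over $\mathbb{F}_p^{\ast}$.

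For part (1) I would split on parity. When $r$ is even, $\bar{\eta}(y)^r = 1$, so each summand equals $(p^\ast)^{-r/2}$ and the total is $(p-1)(p^\ast)^{-r/2}$. When $r$ is odd, the sum becomes $(p^\ast)^{-r/2}\sum_{y\in\mathbb{F}_p^{\ast}}\bar{\eta}(y)$, which vanishes because $\bar{\eta}$ is a nontrivial multiplicative character on $\mathbb{F}_p^{\ast}$.

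For part (2) the same parity split gives
$$
\sum_{y\in\mathbb{F}_p^{\ast}}\sigma_y\bigl((p^\ast)^{-r/2}\zeta_p^z\bigr) = (p^\ast)^{-r/2}\sum_{y\in\mathbb{F}_p^{\ast}}\bar{\eta}(y)^r\zeta_p^{yz}.
$$
If $r$ is even, the inner sum is $\sum_{y\in\mathbb{F}_p^{\ast}}\zeta_p^{yz}$; because $z\neq 0$, the map $y\mapsto yz$ permutes $\mathbb{F}_p^{\ast}$, so the inner sum equals $\sum_{w\in\mathbb{F}_p^{\ast}}\zeta_p^w = -1$, producing $-(p^\ast)^{-r/2}$. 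If $r$ is odd, the substitution $y'=yz$ rewrites the inner sum as $\bar{\eta}(z^{-1})\sum_{y'\in\mathbb{F}_p^{\ast}}\bar{\eta}(y')\zeta_p^{y'} = \bar{\eta}(z)\,G$, where $G$ is the quadratic Gauss sum attached to $\bar{\eta}$.

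The only nontrivial ingredient, and the step I would flag as the main obstacle, is the classical evaluation $G = \sqrt{p^\ast}$ of the quadratic Gauss sum. Granting this, the odd-$r$ case of part (2) produces $\bar{\eta}(z)(p^\ast)^{-r/2}\sqrt{p^\ast} = \bar{\eta}(z)(p^\ast)^{-(r-1)/2}$, matching the statement. All remaining manipulations are routine once the parity of $r$ has been used to decide whether $\bar{\eta}(y)^r$ equals $1$ or $\bar{\eta}(y)$.
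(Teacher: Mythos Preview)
Your argument is correct. The paper itself does not prove this lemma; it simply quotes it as \cite[Lemma 4]{19TXF17} and moves on, so there is no in-paper proof to compare against. Your reduction via $\sigma_y(\sqrt{p^\ast})=\bar\eta(y)\sqrt{p^\ast}$ and $\sigma_y(\zeta_p^z)=\zeta_p^{yz}$, followed by the parity split and the evaluation of the quadratic Gauss sum $\sum_{y\in\mathbb{F}_p^\ast}\bar\eta(y)\zeta_p^{y}=\sqrt{p^\ast}$, is exactly the standard route and is how the cited reference argues as well.
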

\section{Two classes of linear codes}

In this section, we study the weight distribution and weight hierarchy of linear code $C_\D$ in \eqref{defcode}, where its defining set is
\begin{equation}\label{eq:def-set}
\D=\D_{\alpha}=\{(x,y)\in \mathbb{F}_{q}^{2}\setminus\{(0,0)\}: f(x)+\Tr(\alpha y)=0\},
\end{equation}
with $\alpha\in\mathbb{F}_{q}^{\ast}$ and $f(x)$ is defined in \eqref{eq:f}.
In this paper, we suppose that $f(x)$ is a non-degenerate quadratic form. So, we have $R_f = e, \Ker(L_{f}) = \{0\}$ and $\Im(L_{f}) = \F_q$.

\subsection{Some parameters of the presented linear code}

In this subsection, we calculate the length of $C_{\D}$ defined in \eqref{defcode} and the Hamming weight of non-zero codewords of $ C_{\D}$.

\begin{lemma}\label{lem:length}
Let $\alpha\in \mathbb{F}_{q}^{\ast}$ and $C_{\D}$ be defined in \eqref{defcode}. Define $n=|\D|$. Then,
$$n=p^{2e-1}-1.$$
\end{lemma}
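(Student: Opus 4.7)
The plan is to count the pairs $(x,y)\in\F_q^2$ that satisfy $f(x)+\Tr(\alpha y)=0$ and then subtract one for the omitted origin. The key observation is that the defining equation couples $x$ and $y$ only through the single scalar $f(x)$, and the linear form $y\mapsto \Tr(\alpha y)$ behaves very well because $\alpha\ne 0$.

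First, I would fix $x\in\F_q$ arbitrarily and view the condition on $y$ as the affine equation $\Tr(\alpha y)=-f(x)$. Since $\alpha\in\F_q^{\ast}$, the $\F_p$-linear map $T_\alpha:\F_q\to\F_p$, $y\mapsto \Tr(\alpha y)$, is surjective (its kernel is a hyperplane of $\F_q$), so every fiber $T_\alpha^{-1}(c)$ has cardinality $q/p=p^{e-1}$. Hence for each of the $q=p^e$ choices of $x$, there are exactly $p^{e-1}$ admissible $y$'s, giving
\[
\bigl|\{(x,y)\in\F_q^2:f(x)+\Tr(\alpha y)=0\}\bigr|=p^e\cdot p^{e-1}=p^{2e-1}.
\]
Finally, since $(0,0)$ visibly satisfies the equation but is excluded from $\D$, one concludes $n=|\D|=p^{2e-1}-1$.

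There is essentially no obstacle: the argument does not use the quadratic structure of $f$ at all, only that $\alpha\ne 0$. Alternatively, one could derive the same count by additive character orthogonality, writing
\[
|\D|+1=\frac{1}{p}\sum_{z\in\F_p}\sum_{x,y\in\F_q}\zeta_p^{z(f(x)+\Tr(\alpha y))},
\]
noting that for $z\ne 0$ the inner sum over $y$ vanishes because $z\alpha\ne 0$, leaving only the $z=0$ term $q^2/p=p^{2e-1}$. Either route gives the claimed formula immediately.
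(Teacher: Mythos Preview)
Your proposal is correct. Your primary argument---fixing $x$ and counting the fiber of the surjective $\F_p$-linear form $y\mapsto\Tr(\alpha y)$---is more elementary than the paper's own proof, which carries out exactly the additive-character computation you sketch as your alternative: it expands $\frac{1}{p}\sum_{z\in\F_p}\sum_{x,y}\zeta_p^{z(f(x)+\Tr(\alpha y))}$, separates $z=0$ from $z\in\F_p^\ast$, and observes that $\sum_{y}\zeta_p^{\Tr(z\alpha y)}=0$ for $z\ne 0$. So your second route coincides with the paper's, while your first route bypasses characters entirely. The fiber-counting argument has the advantage of making transparent that nothing about $f$ is used beyond its being $\F_p$-valued; the character-sum version, on the other hand, dovetails with the machinery (Lemmas~\ref{lem:6} and~\ref{lem:7}) that the paper needs anyway for the later weight computations, so from the paper's perspective it is a natural warm-up.
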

\begin{proof}
By the orthogonal property of additive characters, we have
\begin{align*}
n
&=\frac{1}{p}\sum_{x,y\in \mathbb{F}_{q}}\sum_{z\in \mathbb{F}_{p}}\zeta_{p}^{z(f(x)+\mathrm{Tr}(\alpha y))}-1  \\
&=\frac{1}{p}\sum_{x,y\in \mathbb{F}_{q}}(1+\sum_{z\in \mathbb{F}_{p}^{\ast}}\zeta_{p}^{z(f(x)+\mathrm{Tr}(\alpha y))})-1  \\
&=\frac{1}{p}q^{2}+\frac{1}{p}\sum_{z\in \mathbb{F}_{p}^{\ast}}\sum_{x,y\in \mathbb{F}_{q}}\zeta_{p}^{z(f(x)+\mathrm{Tr}(\alpha y))}-1  \\
&=\frac{1}{p}q^{2}+\frac{1}{p}\sum_{z\in \mathbb{F}_{p}^{\ast}}\sum_{y\in \mathbb{F}_{q}}\zeta_{p}^{\mathrm{Tr}(z\alpha y)}\sum_{x\in \mathbb{F}_{q}}\zeta_{p}^{zf(x)}-1.
\end{align*}
Since $\alpha\neq0$, we have $ \sum_{y\in \mathbb{F}_{q}}\zeta_{p}^{\mathrm{Tr}(z\alpha y)}=0$. So
$n=\frac{1}{p}q^{2}-1=p^{2e-1}-1$. The proof is finished.
\end{proof}

 For any $(u,v)\in\F_q^2\setminus\{(0,0)\}$, let $c_{(u,v)}$ be the corresponding codeword in $C_{\D}$ defined in \eqref{defcode}, that is,
$$
c_{(u,v)}=\Big(\Tr(ux+vy)\Big)_{(x,y)\in \D}.
$$

\begin{lemma}\label{lem:wt}
Let $(u,v)(\neq (0,0))\in\F_{q}^{2}$. We have
\begin{enumerate}
\item If $v\in \mathbb{F}_{q}\setminus \mathbb{F}_{p}^*\alpha$, then $\wt(c_{(u,v)})=p^{2e-2}(p-1)$.
\item If $v\in \mathbb{F}_{p}^{\ast}\alpha$ and $u=0$, then
$$
\wt(c_{(u,v)})
=\left\{\begin{array}{ll}
p^{2e-2}(p-1), & \textrm{if\ } e  \ \textrm{is odd\ }, \\
p^{2e-2}(p-1)\Big(1-\varepsilon_{f}(p^{\ast})^{-\frac{e}{2}}\Big),  & \textrm{if\ } e  \ \textrm{is even\ }.
\end{array}
\right.
$$
\item If $v\in \mathbb{F}_{p}^{\ast}\alpha, \ u\neq0$, and $f(x_{u})=0$, then
$$
\wt(c_{(u,v)})
=\left\{\begin{array}{ll}
p^{2e-2}(p-1), & \textrm{if\ } e  \ \textrm{is odd\ }, \\
p^{2e-2}(p-1)\Big(1-\varepsilon_{f}(p^{\ast})^{-\frac{e}{2}}\Big),  & \textrm{if\ } e  \ \textrm{is even\ }.
\end{array}
\right.
$$
\item If $v\in \mathbb{F}_{p}^{\ast}\alpha, \ u\neq0$, and $f(x_{u})\neq0$, then
$$
\wt(c_{(u,v)})
=\left\{\begin{array}{ll}
p^{2e-2}\Big(p-1-\varepsilon_{f}\bar{\eta}(-f(x_{u}))(p^{\ast})^{-\frac{e-1}{2}}\Big), & \textrm{if\ } e  \ \textrm{is odd\ }, \\
p^{2e-2}\Big(p-1+\varepsilon_{f}(p^{\ast})^{-\frac{e}{2}}\Big),  & \textrm{if\ } e  \ \textrm{is even\ }.
\end{array}
\right.
$$
\end{enumerate}
\end{lemma}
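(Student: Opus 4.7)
The plan is to compute $\wt(c_{(u,v)})$ via the identity $\wt(c_{(u,v)}) = n - (N-1)$, where
\[
N = \#\{(x,y)\in\F_q^2 : f(x)+\Tr(\alpha y)=0 \text{ and } \Tr(ux+vy)=0\}.
\]
The $-1$ comes from removing the point $(0,0)$, which is always counted in $N$. Using the orthogonality of additive characters, I would write
\[
N = \frac{1}{p^2}\sum_{z_1,z_2\in\F_p}\sum_{x\in\F_q}\zeta_p^{z_1 f(x)+z_2\Tr(ux)}\sum_{y\in\F_q}\zeta_p^{\Tr((z_1\alpha+z_2 v)y)}.
\]
The inner $y$-sum is $q$ when $z_1\alpha+z_2 v=0$ and $0$ otherwise, so the problem reduces to solving this linear condition in $(z_1,z_2)$ over $\F_p$.

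Next I would split on whether $v\in\F_p^*\alpha$. If $v\notin\F_p^*\alpha$, then $\alpha$ and $v$ are $\F_p$-linearly independent (or $v=0$), so only $(z_1,z_2)=(0,0)$ contributes and $N=q^2/p^2=p^{2e-2}$, giving statement~(1) directly. If $v=t\alpha$ with $t\in\F_p^*$, the constraint becomes $z_1=-tz_2$, so $z_2$ runs over $\F_p$; isolating the $z_2=0$ term contributes $q^2/p^2$, and for $z_2=y\in\F_p^*$ one obtains
\[
N = p^{2e-2} + \frac{q}{p^2}\sum_{y\in\F_p^*} T(y),\qquad T(y):=\sum_{x\in\F_q}\zeta_p^{-ytf(x)+\Tr(yux)}.
\]

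The main work, and the key step, is evaluating $\sum_{y\in\F_p^*}T(y)$. I would recognize $T(y)$ as a Galois twist: using $\sigma_{-yt}$ on the exponential sum evaluated at $b=u/t$ via Lemma~\ref{lem:6}(2), and exploiting $L_f(x_u/t)=L_f(x_{u/t})$ (so that $f(x_{u/t})=t^{-2}f(x_u)$, since $L_f$ is $\F_p$-linear and $f$ is quadratic), I get
\[
T(y) = \varepsilon_f q\,\sigma_{-yt}\!\bigl((p^\ast)^{-e/2}\zeta_p^{-w}\bigr),\qquad w:=f(x_{u/t})=f(x_u)/t^2.
\]
As $y$ runs over $\F_p^*$, so does $y':=-yt$, and Lemma~\ref{lem:7} applies cleanly with $r=e$ and $z=-w$. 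This step is the main obstacle because it requires correctly tracking the Galois action on both $\sqrt{p^\ast}^{-e}$ and $\zeta_p^{-w}$, and carefully exchanging the parities of $e$ with the vanishing versus Gauss-sum behaviour in Lemma~\ref{lem:7}.

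Finally, I would assemble the four subcases from this single formula. The condition $u=0$ forces $x_{u/t}=0$ and hence $w=0$, merging case~(2) with the $f(x_u)=0$ half of case~(3). When $w=0$: Lemma~\ref{lem:7}(1) gives $0$ for $e$ odd (yielding $\wt=p^{2e-2}(p-1)$) and $(p-1)(p^\ast)^{-e/2}$ for $e$ even (yielding the factor $1-\varepsilon_f(p^\ast)^{-e/2}$). When $w\neq 0$: Lemma~\ref{lem:7}(2) produces $\bar\eta(-w)(p^\ast)^{-(e-1)/2}$ for $e$ odd, and using $\bar\eta(-w)=\bar\eta(-f(x_u))$ (since $\bar\eta(t^{-2})=1$) recovers the formula in~(4) for $e$ odd; for $e$ even it gives $-(p^\ast)^{-e/2}$, producing the sign-flipped factor $p-1+\varepsilon_f(p^\ast)^{-e/2}$ in~(4). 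Substituting into $\wt(c_{(u,v)})=p^{2e-1}-N$ completes the proof.
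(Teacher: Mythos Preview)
Your proposal is correct and follows essentially the same route as the paper's proof: both reduce $\wt(c_{(u,v)})$ to computing $N$ via orthogonality of additive characters, evaluate the inner $y$-sum to force the linear relation between $z_1\alpha$ and $z_2 v$, and then invoke Lemma~\ref{lem:6} together with the Galois action and Lemma~\ref{lem:7} to finish. The only cosmetic difference is your parametrization $v=t\alpha$ (leading to $w=f(x_u)/t^2$) versus the paper's $\alpha=zv$ (leading to the exponent $-z^2f(x_u)$); since $t=z^{-1}$ these are the same quantity, and your observation $\bar\eta(t^{-2})=1$ plays the same role as the paper's implicit use of $\bar\eta(z^2)=1$. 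One tiny wording issue: in the $v=0$ subcase the condition $z_1\alpha+z_2v=0$ is satisfied by all $(0,z_2)$, not only $(0,0)$, but the $x$-sum then kills $z_2\neq0$ since $u\neq0$, so your conclusion $N=p^{2e-2}$ still holds.
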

\begin{proof}
Put
$N(u,v)=\{(x,y)\in \F_{q}^{2}: f(x)+\Tr(\alpha y)=0, \Tr(ux+vy) = 0\}$, then $\wt(c_{(u,v)})=n-|N(u,v)|+1$.
So, we only need to calculate $|N(u,v)|$.

By the orthogonal property of additive characters, we have
\begin{align*}
&|N(u,v)|=\frac{1}{p^{2}}\sum_{x,y\in \mathbb{F}_{q}}\Big(\sum_{z_{1}\in \mathbb{F}_{p}}\zeta_{p}^{z_{1}f(x)+\mathrm{Tr}(z_{1}\alpha y)}\sum_{z_{2}\in \mathbb{F}_{p}}\zeta_{p}^{\mathrm{Tr}(z_{2}(ux+vy))}\Big)  \\
&=\frac{1}{p^{2}}\sum_{x,y\in \mathbb{F}_{q}}\Big((1+\sum_{z_{1}\in \mathbb{F}_{p}^{\ast}}\zeta_{p}^{z_{1}f(x)+\mathrm{Tr}(z_{1}\alpha y)})(1+\sum_{z_{2}\in \mathbb{F}_{p}^{\ast}}\zeta_{p}^{\mathrm{Tr}(z_{2}(ux+vy))})\Big)  \\
&=\frac{1}{p^{2}}q^{2}+\frac{1}{p^{2}}\sum_{z_{1}\in \mathbb{F}_{p}^{\ast}}\sum_{x,y\in \mathbb{F}_{q}}\zeta_{p}^{z_{1}f(x)+\mathrm{Tr}(z_{1}\alpha y)}+ \frac{1}{p^{2}}\sum_{z_{2}\in \mathbb{F}_{p}^{\ast}}\sum_{x,y\in \mathbb{F}_{q}}\zeta_{p}^{\mathrm{Tr}(z_{2}(ux+vy))} \\
&+\frac{1}{p^{2}}\sum_{z_{1}\in \mathbb{F}_{p}^{\ast}}\sum_{z_{2}\in \mathbb{F}_{p}^{\ast}}\sum_{x,y\in \mathbb{F}_{q}}\zeta_{p}^{z_{1}f(x)+\mathrm{Tr}(z_{2}ux+z_{2}vy+z_{1}\alpha y)}  \\
&=p^{2e-2}+p^{-2}\sum_{z_{1}\in \mathbb{F}_{p}^{\ast}}\sum_{z_{2}\in \mathbb{F}_{p}^{\ast}}\sum_{y\in \mathbb{F}_{q}}\zeta_{p}^{\mathrm{Tr}(z_{2}vy+z_{1}\alpha y)}\sum_{x\in \mathbb{F}_{q}}\zeta_{p}^{z_{1}f(x)+\mathrm{Tr}(z_{2}ux)}.
\end{align*}

If $v\in \mathbb{F}_{q}\setminus \mathbb{F}_{p}^{\ast}\alpha$, then $\sum_{y\in \mathbb{F}_{q}}\zeta_{p}^{\mathrm{Tr}(z_{2}vy+z_{1}\alpha y)}=0$.
Hence, the result of (1) holds directly.

If $v\in \mathbb{F}_{p}^{\ast}\alpha$, i.e., $\alpha=zv$ for some $z\in \mathbb{F}_{p}^{\ast}$, then
\begin{align}
|N(u,v)|
&=p^{2e-2}+p^{e-2}\sum_{z_{1}\in \mathbb{F}_{p}^{\ast}}\sum_{x\in \mathbb{F}_{q}}\zeta_{p}^{z_{1}f(x)-z_{1}\mathrm{Tr}(zux)}. \nonumber
\end{align}

If $v\in \mathbb{F}_{p}^{\ast}\alpha$ and $u=0$, then, by Lemma~\ref{lem:6} and Lemma~\ref{lem:7}, we have
\begin{align*}
|N(u,v)|
&=p^{2e-2}+p^{e-2}\sum_{z_{1}\in \mathbb{F}_{p}^{\ast}}\sum_{x\in \mathbb{F}_{q}}\zeta_{p}^{z_{1}f(x)}  \\
&=p^{2e-2}+p^{e-2}\sum_{z_{1}\in \mathbb{F}_{p}^{\ast}}\sigma_{z_{1}}(\varepsilon_{f}q(p^{\ast})^{-\frac{R_{f}}{2}})  \\
&=p^{2e-2}+p^{2e-2}\varepsilon_{f}\sum_{z_{1}\in \mathbb{F}_{p}^{\ast}}\sigma_{z_{1}}((p^{\ast})^{-\frac{e}{2}})  \\
&=\left\{\begin{array}{ll}
p^{2e-2}, & \textrm{if\ } e  \ \textrm{is odd\ }, \\
p^{2e-2}\Big(1+\varepsilon_{f}(p^{\ast})^{-\frac{e}{2}}(p-1)\Big),  & \textrm{if\ } e  \ \textrm{is even\ }.
\end{array}
\right.
\end{align*}

If $v\in \mathbb{F}_{p}^{\ast}\alpha, u\neq0$, then
\begin{align*}
|N(u,v)|
&=p^{2e-2}+p^{e-2}\sum_{z_{1}\in \mathbb{F}_{p}^{\ast}}\sigma_{z_{1}}\Big(\sum_{x\in \mathbb{F}_{q}}\zeta_{p}^{f(x)-\mathrm{Tr}(zux)}\Big).
\end{align*}
Let $c=zu$. So $x_{c}=zx_{u}$. By Lemma~\ref{lem:6}, we have
\begin{align}
|N(u,v)|
&=p^{2e-2}+p^{e-2}\sum_{z_{1}\in \mathbb{F}_{p}^{\ast}}\sigma_{z_{1}}\Big(\varepsilon_{f}q(p^{\ast})^{-\frac{R_{f}}{2}}\zeta_{p}^{-f(x_{c})}\Big) \nonumber \\
&=p^{2e-2}+p^{2e-2}\varepsilon_{f}\sum_{z_{1}\in \mathbb{F}_{p}^{\ast}}\sigma_{z_{1}}\Big((p^{\ast})^{-\frac{e}{2}}\zeta_{p}^{-f(x_{c})}\Big) \nonumber \\
&=p^{2e-2}+p^{2e-2}\varepsilon_{f}\sum_{z_{1}\in \mathbb{F}_{p}^{\ast}}\sigma_{z_{1}}\Big((p^{\ast})^{-\frac{e}{2}}\zeta_{p}^{-z^{2}f(x_{u})}\Big). \nonumber
\end{align}
The last two results follow directly from lemma~\ref{lem:7}. The proof is finished.
\end{proof}

\begin{remark}\label{rem:dim}
By Lemma \ref{lem:wt}, we know that, for $(u,v)(\neq (0,0))\in\F_{q}^{2}$, we have $\wt(\c_{(u,v)})>0$. So, the map: $\F_q^2\rightarrow C_{\D}$ defined by $(u,v)\mapsto \c_{(u,v)} $
is an isomorphism as linear spaces over $\F_p$. Hence, the dimension of the codes $C_{\D}$ in \eqref{defcode} is equal to $2e$.
\end{remark}

\begin{lemma}\label{lem:dis}
Let $C_{\D}$ be defined in \eqref{defcode}. Then, the minimal distance of the dual code $C_{\D}^{\perp}$ is at least $2$.
\end{lemma}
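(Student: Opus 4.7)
The plan is to use the standard criterion that $d(C_{\D}^{\perp}) \geq 2$ if and only if no coordinate position of $C_{\D}$ is identically zero; equivalently, no column of a generator matrix of $C_{\D}$ is the zero column. I would argue this by contradiction.

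First, I would suppose that $d(C_{\D}^{\perp}) = 1$, so there exists a weight-one codeword in $C_{\D}^{\perp}$. By the orthogonality relation between $C_{\D}$ and $C_{\D}^{\perp}$, this is equivalent to the existence of an index $i$, say corresponding to $\d_i=(x_i,y_i)\in\D$, such that the $i$-th entry $\Tr(\x\cdot\d_i)$ of every codeword $\c_{\x}\in C_{\D}$ vanishes for all $\x=(u,v)\in\F_q^2$.

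Next, I would exploit the non-degeneracy of the $\F_p$-bilinear pairing $\F_q^2 \times \F_q^2 \to \F_p$, $(\x,\y) \mapsto \Tr(\x\cdot\y)$. Consider the $\F_p$-linear functional
\[
\phi_i : \F_q^2 \longrightarrow \F_p,\qquad \phi_i(u,v) = \Tr(u x_i + v y_i).
\]
Since the trace form on $\F_q$ is non-degenerate over $\F_p$, the functional $\phi_i$ is identically zero if and only if $\d_i=(x_i,y_i)=(0,0)$. But the defining set $\D$ in \eqref{eq:def-set} explicitly excludes $(0,0)$, so $\d_i \neq (0,0)$, hence $\phi_i$ is a nonzero linear functional, meaning there exists $(u,v)\in\F_q^2$ with $\Tr(u x_i + v y_i)\neq 0$. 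This contradicts the assumption that the $i$-th coordinate of every codeword of $C_{\D}$ is zero, giving $d(C_{\D}^{\perp})\geq 2$.

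There is essentially no real obstacle here, as this is a routine consequence of the definition of $\D$ together with non-degeneracy of the trace pairing; the only subtlety worth stating carefully is the equivalence between a weight-one word in $C_{\D}^{\perp}$ and an identically zero coordinate in $C_{\D}$. In fact, the argument shows slightly more, namely that each coordinate functional on $C_{\D}$ is surjective onto $\F_p$, which also confirms the dimension statement in Remark~\ref{rem:dim} independently.
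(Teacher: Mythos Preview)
Your proof is correct and follows essentially the same approach as the paper: both argue by contradiction, translate a minimum dual distance of $1$ into the existence of an identically zero coordinate of $C_{\D}$, and then use the non-degeneracy of the trace pairing to force the corresponding element of $\D$ to be $(0,0)$, contradicting the definition of $\D$. Your write-up is simply a bit more explicit about the equivalence between a weight-one dual codeword and an identically zero column.
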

\begin{proof}
 We prove it by contradiction. If not, then there exists a coordinate $i$ such that the $i$-th entry of all of the codewords of $C_{\D}$ is $0$,
that is, $\Tr(ux+vy)=0$ for all $(u,v)\in \F_{q}^{2}$, where $(x,y)\in \D$.
Thus, by the properties of the trace function, we have $(x,y)=(0,0)$. It contradicts with $(x,y) \neq (0,0)$.
\end{proof}

\subsection{Weight distribution}

In this subsection, we give the weight distributions of $C_{\D}$ in \eqref{defcode} case by case on the term of the parity of $e$.

\begin{theorem}\label{thm:wd} Let the code $C_{\D}$ be defined in \eqref{defcode}. Then,
\begin{itemize}
\item[(1)] when $e$ is odd, the code $C_{\D}$ is a $[p^{2e-1}-1,2e]$ linear code over $ \mathbb{F}_{p}  $
with the weight distribution in Table 1.
\begin{table}\label{tab:wd:o}
\centering
\caption{The weight distribution of $C_{\D}$ of Theorem \ref{thm:wd}}
\begin{tabular*}{11.5cm}{@{\extracolsep{\fill}}ll}
\hline
\textrm{Weight} $\omega$ & \textrm{Multiplicity} $A_\omega$   \\
\hline
0 &   1  \\
$(p-1)p^{2e-2}$ &  $p^{2e}-(p-1)^{2}p^{e-1}-1$  \\
$(p-1)p^{2e-2}-p^{\frac{3e-3}{2}}$  & $\frac{1}{2}(p-1)^{2}(p^{e-1}+p^{\frac{e-1}{2}})$  \\
$(p-1)p^{2e-2}+p^{\frac{3e-3}{2}}$  & $\frac{1}{2}(p-1)^{2}(p^{e-1}-p^{\frac{e-1}{2}})$  \\
\hline
\end{tabular*}
\end{table}
\item[(2)] when $e$ is even, set $\varepsilon=(-1)^{\frac{e(p-1)}{4}}\varepsilon_{f}$. The code $C_{\D}$ is a $[p^{2e-1}-1,2e]$ linear code over $ \mathbb{F}_{p}  $
with the weight distribution in Table 2.
\begin{table}
\centering
\caption{The weight distribution of $C_{\D}$ of Theorem \ref{thm:wd}}
\begin{tabular*}{11.5cm}{@{\extracolsep{\fill}}ll}
\hline
\textrm{Weight} $\omega$ \qquad& \textrm{Multiplicity} $A_\omega$   \\
\hline
0 \qquad&   1  \\
$(p-1)p^{2e-2}$ \qquad&  $p^{2e}-p^{e}(p-1)-1$  \\
$p^{\frac{3e-4}{2}}(p-1)(p^{\frac{e}{2}}-\varepsilon)$  \qquad& $p^{\frac{e-2}{2}}\Big(p-1\Big)\Big(p^{\frac{e}{2}}+\varepsilon(p-1)\Big)$  \\
$p^{\frac{3e-4}{2}}\Big((p-1)p^{\frac{e}{2}}+\varepsilon\Big)$  \qquad& $p^{\frac{e-2}{2}}(p-1)^{2}(p^{\frac{e}{2}}-\varepsilon)$  \\
\hline
\end{tabular*}
\end{table}
\end{itemize}
\end{theorem}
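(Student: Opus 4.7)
The plan is to apply Lemma~\ref{lem:wt}, which gives the Hamming weight of each codeword $\c_{(u,v)}$ in four cases, and to count the multiplicity of each weight by determining how many pairs $(u,v)\in\F_q^2\setminus\{(0,0)\}$ fall into each case. The length $n=p^{2e-1}-1$ is from Lemma~\ref{lem:length} and the dimension $k=2e$ from Remark~\ref{rem:dim}, so only the frequencies $A_{\omega}$ remain.

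First I would verify that Lemma~\ref{lem:wt} produces exactly the three nonzero weights tabulated. When $e$ is odd, cases (1)--(3) all yield the ``generic'' weight $(p-1)p^{2e-2}$, and case (4) splits into the two values $(p-1)p^{2e-2}\pm p^{(3e-3)/2}$ according to a sign $\bar\eta(\cdot)$. When $e$ is even, after rewriting $\varepsilon_f(p^*)^{-e/2}=\varepsilon\,p^{-e/2}$ with $\varepsilon=(-1)^{e(p-1)/4}\varepsilon_f$, case (1) gives the generic weight, cases (2) and (3) share the weight $p^{(3e-4)/2}(p-1)(p^{e/2}-\varepsilon)$, and case (4) gives $p^{(3e-4)/2}((p-1)p^{e/2}+\varepsilon)$.

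Cases (1) and (2) are elementary: case (1) contributes $q(q-p+1)-1$ pairs (subtracting $(u,v)=(0,0)$) and case (2) contributes $p-1$. For cases (3) and (4), the non-degeneracy of $f$ makes $L_f$ a bijection on $\F_q$, so $u\mapsto x_u$ is a bijection and the condition $f(x_u)=a$ translates to $x_u\in\overline{D}_a$. Lemma~\ref{lem:1} with $H=\F_q$ and $a=0$ gives $|\overline{D}_0|=p^{e-1}$ when $e$ is odd and $|\overline{D}_0|=p^{e-1}+(p-1)\varepsilon\,p^{(e-2)/2}$ when $e$ is even (using $\bar\eta((-1)^{e/2})\varepsilon_f=\varepsilon$). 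Hence case (3) contributes $(p-1)(|\overline{D}_0|-1)$. For $e$ even, case (4) is the complement $(p-1)(q-|\overline{D}_0|)$, yielding the tabulated $p^{(e-2)/2}(p-1)^2(p^{e/2}-\varepsilon)$ immediately, and adding cases (2) and (3) gives $(p-1)|\overline{D}_0|=p^{(e-2)/2}(p-1)(p^{e/2}+\varepsilon(p-1))$, again matching.

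The main technical step is case (4) for $e$ odd, where the codewords must be split further by the sign $\bar\eta(\cdot)$ appearing in the weight formula. This reduces to computing, for $\tau\in\{\pm 1\}$,
\[
N_\tau = \sum_{a\in\F_p^*,\,\bar\eta(a)=\tau}|\overline{D}_a|
       = \frac{p-1}{2}p^{e-1} + \varepsilon_f p^{(e-1)/2}\!\!\!\sum_{a\in\F_p^*,\,\bar\eta(a)=\tau}\!\!\!\bar\eta\!\bigl((-1)^{(e-1)/2}a\bigr),
\]
using Lemma~\ref{lem:1}. The inner character sum factors as $\bar\eta((-1)^{(e-1)/2})\cdot\tau\cdot\tfrac{p-1}{2}$, and the combination $\varepsilon_f\,\bar\eta((-1)^{(e-1)/2})\cdot\tau$ is forced to be $+1$ precisely for the $\tau$ corresponding to the minus-sign weight (by how $\tau$ is assigned in Lemma~\ref{lem:wt}). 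Hence $N_\tau=\tfrac{p-1}{2}(p^{e-1}\pm p^{(e-1)/2})$ independent of $\varepsilon_f$, and multiplying by $p-1$ for the $v$-choices gives the tabulated $\tfrac12(p-1)^2(p^{e-1}\pm p^{(e-1)/2})$. The careful sign bookkeeping in this last character-sum reduction is the main place where errors can creep in, so a consistency check $\sum_{\omega}A_\omega = p^{2e}$ is advisable at the end.
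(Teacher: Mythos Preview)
Your plan is correct and, for the even-$e$ case, essentially identical to the paper's: both count the three cases directly using Lemma~\ref{lem:wt} and the solution counts $|\overline{D}_a|$ from Lemma~\ref{lem:1}.

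For odd $e$, however, the paper takes a different route. After computing $A_{\omega_1}$ directly (exactly your case~(1)+(2)+(3) count), it does \emph{not} split case~(4) by the quadratic character. Instead it invokes Lemma~\ref{lem:dis} (the dual of $C_{\D}$ has minimum distance $\geq 2$) and the first two Pless power moments to obtain a $2\times 2$ linear system in $A_{\omega_2},A_{\omega_3}$, which it then solves. Your direct count via $N_\tau=\sum_{\bar\eta(a)=\tau}|\overline{D}_a|$ is more uniform (one argument handles both parities, and no appeal to MacWilliams/Pless identities or dual distance is needed), while the paper's Pless-moment shortcut is more robust: it only needs to know the \emph{set} of possible weights from Lemma~\ref{lem:wt}, not which character value produces which weight, so it sidesteps precisely the sign chase you flag as delicate. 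Indeed, tracing through Lemma~\ref{lem:wt}(4) carefully, the factor $\bar\eta(-f(x_u))$ contributes an extra $\bar\eta(-1)$, so the condition for the minus-sign weight is $\varepsilon_f\,\bar\eta((-1)^{(e+1)/2})\,\tau=+1$ rather than $\varepsilon_f\,\bar\eta((-1)^{(e-1)/2})\,\tau=+1$; your suggested consistency check $\sum A_\omega=p^{2e}$ (or better, the first-moment identity) will catch any such slip and recover Table~1.
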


\begin{proof}

(1) Assume that $(u,v)\neq (0,0)$. By Lemma \ref{lem:length} and Lemma \ref{lem:wt},
$\wt(c_{(u,v)})$ has only three values, that is,
$$
\left\{\begin{array}{ll}
\omega_{1}=(p-1)p^{2e-2}, \  \\
\omega_{2}=(p-1)p^{2e-2}-p^{\frac{3e-3}{2}}, \  \\
\omega_{3}=(p-1)p^{2e-2}+p^{\frac{3e-3}{2}}.
\end{array}
\right.
$$

Recall that $A_{\omega_{i}}$ is the multiplicity of $\omega_{i}$. By Lemma \ref{lem:wt}, we have
\begin{align*}\label{eq:1}
  &A_{\omega_{1}} = \Big|\Big\{(u,v)\in\F_q^2|\wt(c_{(u,v)}) = (p-1)p^{2e-2}\Big\}\Big|  \\
  &= \Big|\Big\{(u,v)\in\F_q^2\setminus\{(0,0)\}|u\in\F_q,v\in\F_q\setminus\F_p^*\alpha\Big\}\Big|\\
  &+\Big|\Big\{(0,v)\in\F_q^2|v\in\F_p^*\alpha\Big\}\Big|+\Big|\Big\{(u,v)\in\F_q^2|v\in\F_p^*\alpha, u\neq 0, f(x_u)=0\Big\}\Big|\\
&=q\Big(q-(p-1)\Big)-1+(p-1)+(p-1)(p^{e-1}-1)\\
&=p^{2e}-(p-1)^2p^{e-1}-1,
\end{align*}
where we use the fact that the number of non-zero solutions of the equation $f(x)=0$ in $\F_q$ is $p^{e-1}-1$, one can refer to \cite[Theorem 6.27]{16LN97}.

By Lemma \ref{lem:dis} and the first two Pless Power Moment (\cite[P. 259]{12HP03} ),
we obtain the system of linear equations as follows:
$$
\left\{\begin{array}{ll}
A_{\omega_{1}}=p^{2e}-(p-1)^{2}p^{e-1}-1 \  \\
A_{\omega_{2}}+A_{\omega_{3}}=(p-1)^{2}p^{e-1} \  \\
\omega_{1}A_{\omega_{1}}+\omega_{2}A_{\omega_{2}}+\omega_{3}A_{\omega_{3}}=p^{2e-1}(p^{2e-1}-1)(p-1).
\end{array}
\right.
$$
Solving the system , we get
$$
\left\{\begin{array}{ll}
A_{\omega_{1}}=p^{2e}-(p-1)^{2}p^{e-1}-1 \  \\
A_{\omega_{2}}=\frac{1}{2}(p-1)^{2}(p^{e-1}+p^{\frac{e-1}{2}}) \  \\
A_{\omega_{3}}=\frac{1}{2}(p-1)^{2}(p^{e-1}-p^{\frac{e-1}{2}}).
\end{array}
\right.
$$
Thus,
we obtain the Table 1.

(2) As the proof of (1), assume that $(u,v)\neq (0,0)$. By Lemma \ref{lem:length} and Lemma \ref{lem:wt},
$\wt(c_{(u,v)})$ has only three values, that is,
$$
\left\{\begin{array}{ll}
\omega_{1}=(p-1)p^{2e-2}, \  \\
\omega_{2}=p^{\frac{3e-4}{2}}(p-1)(p^{\frac{e}{2}}-\varepsilon), \  \\
\omega_{3}=p^{\frac{3e-4}{2}}\Big((p-1)p^{\frac{e}{2}}+\varepsilon\Big).
\end{array}
\right.
$$

By Lemma \ref{lem:wt} and Lemma \ref{lem:1} (or \cite[Theorem 6.26]{16LN97}), we have
$$
\left\{\begin{array}{ll}
A_{\omega_{1}}=p^{2e}-(p-1)p^{e}-1, \  \\
A_{\omega_{2}}=p^{\frac{e-2}{2}}\Big(p-1\Big)\Big(p^{\frac{e}{2}}+\varepsilon(p-1)\Big), \  \\
A_{\omega_{3}}=p^{\frac{e-2}{2}}(p-1)^{2}(p^{\frac{e}{2}}-\varepsilon).
\end{array}
\right.
$$
Thus,
we obtain the Table 2.
\end{proof}

\begin{example}
Let $(p,e,\alpha)=(5,3,1)$ and $f(x)=\mathrm{Tr}(x^{2})$.
Then, the corresponding code $C_{\D_{1}}$ has parameters $[ 3124,6,2375]$ and weight enumerator
$1+240x^{2375}+15224x^{2500}+160x^{2625}$.
\end{example}

\begin{example}
Let $(p,e,\alpha)=(3,4,1)$ and $f(x)=\mathrm{Tr}(\theta x^{2})$. By Corollary 1 in \cite{19TXF17}, we have
$\varepsilon_{f}=1$. Then, the corresponding code $C_{\D}$ has parameters $[2186,8,1296]$ and weight enumerator
$1+66x^{1296}+6398x^{1458}+96x^{1539}$. Here $\theta$ is a primitive element of $\mathbb{F}_{q}$.
\end{example}

\begin{example}
Let $(p,e,\alpha)=(3,4,\theta)$ and $f(x)=\mathrm{Tr}(x^{2})$. By Corollary 1 in \cite{19TXF17}, we have
$\varepsilon_{f}=-1$. Then, the corresponding code $C_{\D}$ has parameters $[2186,8,1377]$ and weight enumerator
$1+120x^{1377}+6398x^{1458}+42x^{1620}$. Here $\theta$ is a primitive element of $\mathbb{F}_{q}$.
\end{example}

\subsection{Weight hierarchy}

In this subsection, we give the weight hierarchies of $C_{\D}$ in \eqref{defcode} case by case on the term of the parity of $e$.

By Remark~\ref{rem:dim}, we know that the dimension of the code $C_{\D}$ defined in \eqref{defcode} is $2e$.
So, by Proposition 2.1 \cite{LL20}, we give a general formula,
that is
\begin{align}
        d_{r}(C_{\D})&=n-\max\{|H_r^\perp\cap \D|: H_r \in [\mathbb{F}_{q}^{2},r]_{p}\} \\
        &=n-\max\{|H_{2e-r}\cap \D|: H_{2e-r} \in [\mathbb{F}_{q}^{2},2e-r]_{p}\}\label{eq:d_r:2},
\end{align}
which will be employed to calculate the generalized Hamming weight $d_r(C_\D)$. Here $H_r^\perp = \{\y\in\F_q^2:\Tr(\x\cdot \y) =0, \textrm{for any $\y\in H_r$} \}$.

Let $H_r$ be an $r$-dimensional subspace of $\mathbb{F}_{q}^{2}$ and $\beta_{1},\cdots,\beta_{r}$ be an $\mathbb{F}_{p}$-basis of $H_r$. Set
$$
N(H_r)=\{\x=(x,y)\in \mathbb{F}_{q}^{2}: f(x)+\mathrm{Tr}(\alpha y)=0, \mathrm{Tr}(\x\cdot \beta_{i})=0, 1\leq i\leq r\}.
$$
Then, $N(H_r) = (\D\cap H_r^\perp)\cup\{(0,0)\}$, which concludes that $|N(H_r)|=|H_r^\perp\cap\D |+1$. Hence, we have
\begin{equation}\label{eq:d_r:3}
     d_{r}(C_{\D})=n+1-\max\{N(H_r)|: H_r \in [\mathbb{F}_{q}^{2},r]_{p}\}
\end{equation}

\begin{lemma} \label{lem:d_r:2}
Let $H_r$ and $N(H_r)$ be defined as above. We have
\begin{itemize}
  \item[(1)] if $\alpha\notin  \Prj_{2}(H_r)$, $N(H_r)=p^{2e-(r+1)}$.
  \item[(2)] if $\alpha\in  \Prj_{2}(H_r)$,
  $$N(H_r)=p^{2e-(r+1)}\Big(1+\varepsilon_{f}\sum\limits_{(y_1,-u)\in H_r}\sum\limits_{z\in \F_p^{\ast}}\sigma_{z}\big((p^{\ast})^{-\frac{e}{2}}\zeta_{p}^{f(x_{y_1})}\big)\Big).$$
\end{itemize}
Here $ \Prj_{2}$ is the second projection from $\mathbb{F}_{q}^{2}$ to $\mathbb{F}_{q}$ defined by $(x,y)\mapsto y$.
\end{lemma}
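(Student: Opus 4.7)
The plan is to evaluate $|N(H_r)|$ by orthogonality of additive characters. Encoding the single quadratic constraint $f(x)+\Tr(\alpha y)=0$ as $\frac{1}{p}\sum_{z\in\F_p}\zeta_p^{z(f(x)+\Tr(\alpha y))}$ and the $r$ linear constraints $\Tr(\x\cdot\beta_i)=0$ as $\frac{1}{p^r}\sum_{(u_1,v_1)\in H_r}\zeta_p^{\Tr(u_1 x+v_1 y)}$ (using that $\beta_1,\ldots,\beta_r$ is an $\F_p$-basis of $H_r$), I obtain
\[
|N(H_r)|=\frac{1}{p^{r+1}}\sum_{z\in\F_p}\sum_{(u_1,v_1)\in H_r}\sum_{(x,y)\in\F_q^2}\zeta_p^{zf(x)+\Tr((z\alpha+v_1)y+u_1 x)}.
\]
I would then split this outer sum according to whether $z=0$ or $z\in\F_p^{\ast}$.

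The $z=0$ term factorises into independent $x$- and $y$-sums and vanishes unless $(u_1,v_1)=(0,0)$, contributing $q^2/p^{r+1}=p^{2e-r-1}$, i.e.\ the leading `$1$' in the bracket of (2). For $z\in\F_p^{\ast}$, the inner $y$-sum $\sum_y\zeta_p^{\Tr((z\alpha+v_1)y)}$ vanishes unless $v_1=-z\alpha$. Because $\Prj_2(H_r)$ is an $\F_p$-subspace, the condition $(u_1,-z\alpha)\in H_r$ is realisable for some $z\in\F_p^{\ast}$ exactly when $\alpha\in\Prj_2(H_r)$; this immediately settles case~(1).

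For case~(2), the remaining $x$-sum $\sum_x\zeta_p^{zf(x)+\Tr(u_1 x)}$ is the computational nucleus. The plan is to pull the scalar $z\in\F_p^{\ast}$ out as a Galois twist: since $\sigma_z(\zeta_p^a)=\zeta_p^{za}$, this sum equals $\sigma_z\bigl(\sum_x\zeta_p^{f(x)+\Tr((u_1/z)x)}\bigr)$, to which Lemma~\ref{lem:6}(2) applies with $b=-u_1/z$ (valid since $\Im(L_f)=\F_q$ by non-degeneracy of $f$). The substitution $y_1=u_1/z$ then recasts the indexing condition $(u_1,-z\alpha)\in H_r$ as $(y_1,-\alpha)\in H_r$, exploiting the $\F_p$-linearity of $H_r$ and $z\in\F_p^{\ast}$. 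Since the new index set is independent of $z$, the order of the double sum can be exchanged to produce $\sum_{(y_1,-\alpha)\in H_r}\sum_{z\in\F_p^{\ast}}\sigma_z((p^{\ast})^{-e/2}\zeta_p^{-f(x_{y_1})})$, and collecting the overall factor $\varepsilon_f q^2/p^{r+1}=\varepsilon_f p^{2e-r-1}$ matches the claimed formula.

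The main obstacle is the bookkeeping around the change of variable and the Galois action: the crucial equivalence $(u_1,-z\alpha)\in H_r\Longleftrightarrow (u_1/z,-\alpha)\in H_r$ relies on $1/z\in\F_p^{\ast}$ together with the $\F_p$-linearity of $H_r$, and the simplification $f(x_{-u_1/z})=f(x_{y_1})$ uses $x_{-b}=-x_b$ (from injectivity and $\F_p$-linearity of $L_f$) combined with $f(-x)=f(x)$. Everything else amounts to gathering the scalar constants and reading off the notation $(y_1,-u)\in H_r$ as $(y_1,-\alpha)\in H_r$ in the statement.
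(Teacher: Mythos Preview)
Your proposal is correct and follows essentially the same route as the paper: orthogonality of characters, splitting $z=0$ versus $z\in\F_p^{\ast}$, the $\F_p$-linearity of $H_r$ to rescale by $1/z$, the Galois action $\sigma_z$, and Lemma~\ref{lem:6}(2). The only cosmetic difference is ordering: the paper rescales $(y_1,y_2)\mapsto(y_1/z,y_2/z)$ \emph{before} evaluating the $y$-sum, whereas you first fix $v_1=-z\alpha$ and then substitute $y_1=u_1/z$; your explicit handling of the sign via $x_{-b}=-x_b$ and $f(-x)=f(x)$ is a point the paper leaves implicit (indeed the stated exponent $+f(x_{y_1})$ versus your $-f(x_{y_1})$ is immaterial for the subsequent applications of Lemma~\ref{lem:7}).
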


\begin{proof}
By the orthogonal property of additive characters, we have
\begin{align*}
&p^{r+1}|N(H_r)|
=\sum_{\x=(x,y)\in \F_q^2}\sum_{z\in \F_p}\zeta_{p}^{z f(x)+\Tr(z\alpha y)}\prod_{i=1}^{r}\sum_{x_{i}\in \F_p}\zeta_{p}^{\Tr(x_{i}(\x\cdot \beta_{i}))}  \\
&=\sum_{\x=(x,y)\in \F_q^{2}}\sum_{z\in \F_p}\zeta_{p}^{z f(x)+\Tr(z(\alpha y))}\sum_{\y\in H_r}\zeta_{p}^{\Tr(\x\cdot\y)} \\
&=\sum_{\x=(x,y)\in \F_q^{2}}\sum_{\y\in H_r}\zeta_{p}^{\Tr(\x\cdot \y)}
+\sum_{\x=(x,y)\in \F_q^{2}}\sum_{z\in \F_p^{\ast}}\sum_{\y\in H_r}\zeta_{p}^{z f(x)+\Tr(\x\cdot\y+z \alpha y)} \\
&=q^{2}+\sum_{\x=(x,y)\in \F_q^{2}}\sum_{z\in \F_p^{\ast}}\sum_{\y\in H_r}\zeta_{p}^{z f(x)+\Tr(\x\cdot\y+z \alpha y)}.
\end{align*}
where the last equation comes from
\begin{align*}
\sum_{\x=(x,y)\in \F_{q}^{2}}\sum_{\y\in H_{r}}(-1)^{\Tr(\y\cdot \x)}
&=\sum_{\x=(x,y)\in \F_{q}^{2}} \ 1+\sum_{\x=(x,y)\in \F_{q}^{2}}\sum_{(0,0)\neq\y\in H_{r}}(-1)^{\Tr(\y\cdot \x)}  \\
&=q^{2}+\sum_{(0,0)\neq\y\in H_{r}}\sum_{\x=(x,y)\in \F_{q}^{2}}(-1)^{\Tr(\y\cdot \x)}=q^{2}.
\end{align*}

Denote $B_{H_r}=\sum_{\x=(x,y)\in \F_q^{2}}\sum_{z\in \F_p^{\ast}}\sum_{\y\in H_r}\zeta_{p}^{z f(x)+\Tr(\x\cdot\y+z \alpha y)}$ for short.
Then
$$p^{r+1}|N(H_r)| = q^2+B_{H_r}$$
 and
\begin{align*}
    B_{H_r} &=\sum_{\x=(x,y)\in \F_q^{2}}\sum_{z\in \F_p^{\ast}}\sum_{\y\in H_r}\zeta_{p}^{z f(x)+\Tr(\x\cdot\y+z \alpha y)} \\
&=\sum_{(x,y)\in \F_q^{2}}\sum_{z\in \F_p^{\ast}}\sum_{(y_1,y_2)\in H_r}\zeta_{p}^{z f(x)+\Tr(y_1x+y_2y+z \alpha y)}   \\
&=\sum_{(y_1,y_2)\in H_r}\sum_{z\in \F_p^{\ast}}\sum_{x\in \F_q}\zeta_{p}^{z f(x)+\Tr(y_1x)}\sum_{y\in \F_q}\zeta_{p}^{\Tr(y_2y+z \alpha y)}    \\
&=\sum_{(y_1,y_2)\in H_r}\sum_{z\in \F_p^{\ast}}\sum_{x\in \F_q}\zeta_{p}^{z f(x)+z\Tr(\frac{y_1}{z}x)}\sum_{y\in \F_q}\zeta_{p}^{z\Tr(\frac{y_2}{z}y+\alpha y)}   \\
&=\sum_{(y_1,y_2)\in H_r}\sum_{z\in \F_p^{\ast}}\sum_{x\in \F_q}\zeta_{p}^{z f(x)+z\Tr(y_1x)}\sum_{y\in \F_q}\zeta_{p}^{z\Tr(y_2y+\alpha y)}
\end{align*}

If $\alpha\notin  \Prj_{2}(H_r)$, then $\sum_{y\in \mathbb{F}_{q}}\zeta_{p}^{z\Tr(y_2y+\alpha y)}=0$. So $B_{H_r}=0$.

If $\alpha\in  \Prj_{2}(H_r)$, by Lemma~\ref{lem:6}, we have
\begin{align*}
B_{H_r}&=q\sum_{(y_1,-\alpha)\in H_r}\sum_{z\in \F_p^{\ast}}\sum_{x\in \F_{q}}\zeta_{p}^{z f(x)+z\Tr(y_1x)}    \\
&=q\sum_{(y_1,-\alpha)\in H_r}\sum_{z\in \F_p^{\ast}}\sigma_{z}\Big(\sum_{x\in \F_q}\zeta_{p}^{f(x)+\Tr(y_1x)}\Big)    \\
&=q\sum_{(y_1,-\alpha)\in H_r}\sum_{z\in \F_p^{\ast}}\sigma_{z}\Big(\varepsilon_{f}q(p^{\ast})^{-\frac{R_{f}}{2}}\zeta_{p}^{f(x_{y_1})}\Big)    \\
&=\varepsilon_{f}q^{2}\sum_{(y_1,-\alpha)\in H_r}\sum_{z\in \F_p^{\ast}}\sigma_{z}\Big((p^{\ast})^{-\frac{e}{2}}\zeta_{p}^{f(x_{y_1})}\Big)
\end{align*}

So, the desired result is obtained. Thus, we complete the proof.
\end{proof}

In the following sequel, we shall determine the weight hierarchy of $C_{\D}$ in \eqref{defcode}  by calculating $N(H_r)$ in Lemma~\ref{lem:d_r:2} and $|\D\cap H_{2e-r}|$ in \eqref{eq:d_r:2}.

\begin{theorem}\label{thm:wh}
 Let the code $C_{\D}$ be defined in \eqref{defcode} and $e\geq 3$. Define
$$
e_0=\left\{\begin{array}{ll}
\frac{e-1}{2}, & \textrm{if $e$ is odd}, \\
\frac{e}{2}, & \textrm{if $e$ is even and $\varepsilon_{f}=(-1)^{\frac{e(p-1)}{4}}$},\\
\frac{e-2}{2}, & \textrm{if $e$ is even and $\varepsilon_{f}=-(-1)^{\frac{e(p-1)}{4}}$}.
\end{array}
\right.
$$ Then,
\begin{itemize}
\item[(1)]
when $e-e_0+1 \leq r \leq 2e$, we have
$$
d_{r}(C_{\D})=p^{2e-1}-p^{2e-r}.
$$
\item[(2)] when $0\leq r\leq e-e_0$, we have
$$
d_{r}(C_{\D})=\left\{\begin{array}{ll}
p^{2e-1}-p^{2e-r-1}-p^{\frac{3e-3}{2}},  \textrm{if $2\nmid e$\ }, \\
p^{2e-1}-p^{2e-r-1}-(p-1)p^{\frac{3e-4}{2}},  \textrm{if $2\mid e$, $\varepsilon_{f}=(-1)^{\frac{e(p-1)}{4}}$},\\
p^{2e-1}-p^{2e-r-1}-p^{\frac{3e-4}{2}},  \textrm{if\ } \ \textrm{if $2\mid e$, $\varepsilon_{f}=-(-1)^{\frac{e(p-1)}{4}}$}.
\end{array}
\right.
$$
\end{itemize}
\end{theorem}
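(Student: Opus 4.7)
The plan is to compute $d_r(C_{\D})$ via the identity $d_r(C_{\D})=n+1-\max_{H_r\in[\F_q^2,r]_p}|N(H_r)|$ of \eqref{eq:d_r:3} combined with Lemma~\ref{lem:d_r:2}. Writing $t=\dim\Prj_2(H_r)\ge 1$ in the on-support case $\alpha\in\Prj_2(H_r)$, so that $A_r=\{y_1:(y_1,-\alpha)\in H_r\}$ is an affine subset of $\F_q$ of size $p^{r-t}$, I would first apply Lemma~\ref{lem:7} to the Galois sum in Lemma~\ref{lem:d_r:2}(2) to put $|N(H_r)|$ into closed form: $|N(H_r)|=p^{2e-r-1}+\varepsilon_f\,\delta\,p^{(3e-2r-1)/2}T_r$ when $e$ is odd, with $T_r=\sum_{y_1\in A_r,\,f(x_{y_1})\ne0}\bar\eta(f(x_{y_1}))$ and $\delta=\pm 1$ the sign of $(p^{\ast})^{-(e-1)/2}$; and $|N(H_r)|=p^{2e-r-1}+\varepsilon\,p^{(3e-2)/2-r}(pN_0-p^{r-t})$ when $e$ is even, with $N_0=\#\{y_1\in A_r:f(x_{y_1})=0\}$ and $\varepsilon=\varepsilon_f(-1)^{e(p-1)/4}$. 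The off-support case of Lemma~\ref{lem:d_r:2}(1) gives only the baseline value $p^{2e-r-1}$.

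For the large-$r$ range (item (1), $r\ge e-e_0+1$) I would instead use the dual formulation \eqref{eq:d_r:2}. Since $(0,0)\notin\D$, one always has $|H_r^\perp\cap\D|\le|H_r^\perp|-1=p^{2e-r}-1$, with equality iff $f(x)+\Tr(\alpha y)=0$ throughout $H_r^\perp$. Plugging in the scalar multiples $(cx,cy)$ for $c\in\F_p$ and separating the coefficients of $c$ and $c^2$ (valid since $p\ge 3$), this condition decouples into $\Prj_1(H_r^\perp)$ being totally isotropic for $f$, and $\Prj_2(H_r^\perp)\subseteq\ker\Tr(\alpha\,\cdot\,)$. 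Taking $H_r^\perp=V_x\times V_y$ with $V_x$ a maximal totally isotropic subspace of dimension $e_0$ (whose three values in the theorem correspond exactly to the three cases of Lemmas~\ref{lem:2} and~\ref{lem:3}) and $V_y\subseteq\ker\Tr(\alpha\,\cdot\,)$ of complementary dimension, the maximal $\dim H_r^\perp$ equals $e_0+(e-1)$, which is precisely the cutoff $r\ge e-e_0+1$; this construction realises the bound, proving item (1).

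For the small-$r$ range (item (2)), the dimension count above shows that no $H_r^\perp$ can sit entirely in $\D\cup\{(0,0)\}$, so the extremum must come from the on-support closed forms of the first paragraph. The crude estimates $|T_r|\le|A_r|\le p^{r-1}$ and $0\le N_0\le p^{r-1}$ feed into those forms to give $\max|N(H_r)|\le p^{2e-r-1}+X$, where $X$ is $p^{(3e-3)/2}$, $(p-1)p^{(3e-4)/2}$, or $p^{(3e-4)/2}$ according to the three subcases; the off-support baseline $p^{2e-r-1}$ is dominated since $X>0$. To realise equality I would take $t=1$ and $H_r=(A_r^0\times\{0\})\oplus\langle(y_1^0,-\alpha)\rangle$, choosing $A_r^0\subseteq\F_q$ of dimension $r-1$ so that $\widetilde A_r^0:=L_f^{-1}(-\tfrac12 A_r^0)$ is totally isotropic for $f$ (available from Lemma~\ref{lem:2} whenever $r-1<e/2$) and $y_1^0$ so that $x_{y_1^0}\in\widetilde A_r^{0,\perp_f}$; this makes $f$ constant on the affine coset $x_{y_1^0}+\widetilde A_r^0$, and Lemma~\ref{lem:1} applied to $f|_{\widetilde A_r^{0,\perp_f}}$ guarantees that $x_{y_1^0}$ can be chosen so that this common value lies in the QR or QNR class (or is $0$) demanded by the subcase.

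The main obstacle is the single boundary case $r=e-e_0$ when $e$ is even and $\varepsilon_f=-(-1)^{e(p-1)/4}$: here $r-1=e/2$ falls outside Lemma~\ref{lem:2}'s range, and Lemma~\ref{lem:3} forbids a totally isotropic subspace of dimension $e/2$, so the ``$f$ constant on a coset'' trick is unavailable. The fix is to take $\widetilde A_r^0$ of rank $e/2-1$ whose one-dimensional radical is spanned by a non-zero isotropic vector (which exists because $\overline D_0\ne\{0\}$), and then to exploit the Gram-matrix identity $\varepsilon_{f|_{\widetilde A_r^0}}\cdot\varepsilon_{f|_{\widetilde A_r^{0,\perp_f}}}=\varepsilon_f\,\bar\eta(-1)$ together with Lemma~\ref{lem:1} to choose $x_{y_1^0}\in\widetilde A_r^{0,\perp_f}$ with $f(x_{y_1^0})$ equal to the unique value of $\F_p$ not attained by $f|_{\widetilde A_r^0}$; this forces $N_0=0$ on the affine coset. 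The rest is routine case-chasing on the parity of $e$ and the sign $\delta$ to match the three displayed formulas.
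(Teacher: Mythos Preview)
Your overall strategy coincides with the paper's: both reduce $d_r(C_{\D})$ to maximising $|N(H_r)|$ via Lemma~\ref{lem:d_r:2}, evaluate the Galois sum through Lemma~\ref{lem:7}, bound the resulting expression by $|A_r|\le p^{r-1}$, and then exhibit an $H_r$ attaining the bound by transporting a suitable subspace of $\F_q$ through $L_f$.  Your treatment of item~(1) via the dual formulation and the decoupling $c^2f(x)+c\Tr(\alpha y)=0$ is a cleaner version of what the paper does implicitly, and your constructions in the non-boundary subcases of item~(2) are essentially the paper's constructions re-expressed through the totally-isotropic subspace $\widetilde A_r^0=L_f^{-1}(-\tfrac12 A_r^0)$.

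There is, however, a genuine gap in the single boundary case $r=e-e_0=\tfrac{e}{2}+1$ with $e$ even and $\varepsilon_f=-(-1)^{e(p-1)/4}$.  You propose taking $\widetilde A_r^0$ of dimension $\tfrac{e}{2}$ with a one-dimensional radical, hence $f|_{\widetilde A_r^0}$ of rank $\tfrac{e}{2}-1$, and then picking $x_{y_1^0}\in\widetilde A_r^{0,\perp_f}$ so that $f(x_{y_1^0})$ equals ``the unique value of $\F_p$ not attained by $f|_{\widetilde A_r^0}$''.  But a quadratic form of rank $\ge 2$ over $\F_p$ is surjective onto $\F_p$: by Lemma~\ref{lem:1}, for $R_H\ge 2$ the count $|H\cap\overline D_a|$ is at least $p^{\dim H-1}-p^{\dim H-2}>0$ for every $a$.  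So for $e\ge 6$ your $f|_{\widetilde A_r^0}$ misses no value, and the argument ``$N_0=0$ because $-f(x_{y_1^0})\notin f(\widetilde A_r^0)$'' breaks down.  (Your Gram-matrix identity $\varepsilon_{f|_{\widetilde A_r^0}}\cdot\varepsilon_{f|_{\widetilde A_r^{0,\perp_f}}}=\varepsilon_f\,\bar\eta(-1)$ is also not correct in this setup; the correct exponent on $\bar\eta(-1)$ depends on $e$.)

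Two ways to repair this step: either take $\widetilde A_r^0$ of rank \emph{one} (a totally isotropic $(\tfrac{e}{2}-1)$-space extended by an anisotropic vector in its orthogonal), in which case $f|_{\widetilde A_r^0}$ really does miss one quadratic-residue class, and a sign computation using the Witt decomposition $\varepsilon_f=\bar\eta(-1)^{e/2-1}\varepsilon_{T^{\perp_f}}$ then shows the missed class is attained on $\widetilde A_r^{0,\perp_f}$; or follow the paper, which instead invokes Lemma~\ref{lem:4} to locate an $(\tfrac{e}{2}+1)$-dimensional subspace $J_r\subset\F_q$ with $R_{J_r}=2$ and the correct sign, and builds $H_r$ from a basis of $J_r$ adapted to its $(\tfrac{e}{2}-1)$-dimensional isotropic part.
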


\begin{proof}
(1) when $e-e_0+1 \leq r \leq 2e$, then $0 \leq 2e-r \leq e_0+e-1$.
 Let $T_{\alpha}=\{x\in \mathbb{F}_{q}:\ \mathrm{Tr}(\alpha x)=0\}$.
It is easy to know that $\dim(T_{\alpha})=e-1$.
By Lemma~\ref{lem:2} and Lemma~\ref{lem:3}, there exists an $e_0$-dimensional subspace $J_{e_0}$ of $\F_q$ such that $f(x)=0$ for any $x\in J_{e_0}$.
Note that the dimension of the subspace $J_{e_0}\times T_{\alpha}$ is $e_0+e-1$. Let $H_{2e-r}$ be a $(2e-r)$-dimensional subspace of $J_{e_0}\times T_{\alpha}$,
then,
$$
|H_{2e-r}\cap \D|=2^{2e-r}-1,
$$
Hence, by \eqref{eq:d_r:2}, we have
$$
d_{r}(C_{\D})=n-\max\{|\D \cap H|: H \in [\mathbb{F}_{p^{e}}^{2},2e-r]_{p}\}=p^{2e-1}-p^{2e-r}.
$$

Thus, it remains to determine $d_r(C_{\D})$ when $0 < r \leq e-e_0$.

(2) When $0\leq r\leq e-e_0$, we discuss case by case.

\textbf{Case 1:}\ $e (e\geq 3)$ is odd. In this case, $e_0 = \frac{e-1}{2}$ and $e-e_0 = \frac{e+1}{2}$, that is, $0\leq r\leq \frac{e+1}{2}$.

Suppose $H_r$ is an $r$-dimensional subspace of $\F_q^2$. If $\alpha\in  \Prj_{2}(H_{r})$, by Lemma~\ref{lem:7} and Lemma~\ref{lem:d_r:2}, we have
\begin{align*}
N(H_r)=p^{2e-(r+1)}\Big(1+\varepsilon_{f}(-1)^{\frac{(e-1)(p-1)}{4}}p^{-\frac{e-1}{2}}\sum_{(y_1,-\alpha)\in H_{r}}\bar{\eta}(f(x_{y_1})\big)\Big).
\end{align*}

Now we want to construct $H_r$ such that $N(H_r)$ reaches its maximum, that is, the number of such as $(y_1,-\alpha)$ is maximal in $H_{r}$ and for any $(y_1,-\alpha)\in H_{r}, \bar{\eta}(f(x_{y_1})) = \varepsilon_{f}(-1)^{\frac{(e-1)(p-1)}{4}}$. The constructing method is as follows.

Taking an element $a\in\F_p^*$ satisfying $\overline{\eta}(a)=(-1)^{\frac{(e-1)(p-1)}{4}}\epsilon_{f}$, then, by Lemma~\ref{lem:1} (or \cite[Theorem 6.27]{16LN97}), we know that the length and the dimension of $C_{\overline{D}_{a}}$ in \eqref{defcode0} are $p^{e-1}+p^{\frac{e-1}{2}}$ and $e$, respectively.
Combining formula \eqref{eq:d_r:2} with Lemma~\ref{lem:5}, we have
\begin{align*}
    d_{e-r}(C_{\overline{D}_{a}})&=p^{e-1}+p^{\frac{e-1}{2}}-\max\{|\overline{D}_{a} \cap H|: H \in [\F_q,r]_{p}\}\\
    &=p^{e-1}+p^{\frac{e-1}{2}}-2p^{e-(e-r)-1},
\end{align*}
which follows that
$$
\max\{|\overline{D}_{a} \cap H|: H \in [\F_q,r]_{p}\}=2p^{r-1}.
$$
Thus, there exists an $r$-dimensional subspace $J_{r}$ of $\F_q$ such that $|\overline{D}_{a} \cap J_r| = 2p^{r-1}$. By Lemma~\ref{lem:1}, we know that $R_{J_{r}}=1$ and $\epsilon_{J_r}=\overline{\eta}(a)$, which concludes that there exists an $(r-1)$-dimensional subspace $J_{r-1}$ of $J_{r}$ satisfying $f(J_{r-1})=0$ and
$\overline{\eta}(f(x))=(-1)^{\frac{(e-1)(p-1)}{4}}\epsilon_{f}$, for each $x\in J_{r}\setminus J_{r-1}$.
Let $\alpha_{1},\alpha_{2},\cdots,\alpha_{r-1}$ be an $\mathbb{F}_{p}$-basis of $J_{r-1}$. Take an element $\alpha_{r}\in J_{r}\setminus J_{r-1}$ and
set
$$
\mu_{1}=\alpha_{1}+\alpha_{r},\mu_{2}=\alpha_{2}+\alpha_{r},\cdots,\mu_{r-1}=\alpha_{r-1}+\alpha_{r},\mu_{r}=\alpha_{r}.
$$
Then $\mu_{1},\mu_{2},\cdots,\mu_{r-1},\mu_{r}$ is an $\mathbb{F}_{p}$-basis of $J_{r}$.
Define
$$
\lambda_{1}=(\mu_{1},-\alpha),\lambda_{2}=(\mu_{2},-\alpha),\cdots,\lambda_{r-1}=(\mu_{r-1},-\alpha),\lambda_{r}=(\mu_{r},-\alpha)
$$
and $V_{r}=\langle\lambda_{1},\lambda_{2},\cdots,\lambda_{r-1},\lambda_{r}\rangle$.
Then $V_{r}$ is an $r$-dimensional subspace of $\mathbb{F}_{q}^{2}$.
Set $S(-\alpha)=\{(y,z)\in V_{r}:\ z=-\alpha\}$. It is easy to know that the cardinal number of $S(-\alpha)$ is $p^{r-1}$.
We assert that $f(y)=f(\alpha_{r})$ for any $(y,-\alpha)\in S(-\alpha)$. In fact,
note that $(y,-\alpha)$ has the following unique representation:
\begin{align*}
   (y,-\alpha)&=x_{1}\lambda_{1}+x_{2}\lambda_{2}+\cdots+x_{r-1}\lambda_{r-1}+x_{r}\lambda_{r},\\
   &=(x_{1}\alpha_{1}+x_{2}\alpha_{2}+\cdots+x_{r-1}\alpha_{r-1}+\alpha_{r},-\alpha), x_{i}\in \mathbb{F}_{p},
\end{align*}
then, we have $y=x_{1}\alpha_{1}+x_{2}\alpha_{2}+\cdots+x_{r-1}\alpha_{r-1}+\alpha_{r}$.
Thus, we have $f(y)=f(\alpha_{r})$.

Take $H_{r}$ as follows:
$$
H_{r}=\langle(L_f(\mu_{1}),-\alpha),(L_f(\mu_{2}),-\alpha),\cdots,(L_f(\mu_{r-1}),-\alpha),(L_f(\mu_{r}),-\alpha)\rangle,
$$
then $H_{r}$ is our desired $r$-dimensional subspace of $\F_q^2$ and its $N(H_r)$ reaches the maximum
$$ N(H_r)=p^{2e-(r+1)}\Big(1+p^{r-1-\frac{e-1}{2}}\Big) = p^{2e-r-1} + p^{\frac{3e-3}{2}}. $$
So, for $0\leq r\leq\frac{e+1}{2}$, the desired result is obtained by Lemma~\ref{lem:d_r:2} and \eqref{eq:d_r:3}.

\textbf{Case 2:}\ $e(e\geq 3) $ is even and $\varepsilon_{f}=(-1)^{\frac{e(p-1)}{4}}$. In this case, $e_0 = \frac{e}{2}$ and $e-e_0 = \frac{e}{2}$, that is, $0\leq r\leq \frac{e}{2}$.

Suppose $H_r$ is an $r$-dimensional subspace of $\F_q^2$ and $\alpha\in  \Prj_{2}(H_{r})$. Recall that $v(0)=p-1$ and $v(x)=-1$ for $x\in\mathbb{F}_{p}^{\ast}$ defined in Lemma~\ref{lem:1}.
By Lemma~\ref{lem:7} and Lemma~\ref{lem:d_r:2}, we have
\begin{align*}
N(H_r)&=p^{2e-(r+1)}\Big(1+\varepsilon_{f}\sum\limits_{(y_1,-u)\in H_r}\sum\limits_{z\in \F_p^{\ast}}\sigma_{z}\big((p^{\ast})^{-\frac{e}{2}}\zeta_{p}^{f(x_{y_1})}\big)\Big)\\
&=p^{2e-(r+1)}\Big(1+p^{-\frac{e}{2}}\sum_{(y_1,-\alpha)\in H_{r}}v(f(x_{y_1})\Big).
\end{align*}

Let $J_{r}$ be a subspace of $J_{\frac{e}{2}}$ with an basis $\mu_{1},\mu_{2},\cdots,\mu_{r}$.
Take $H_{r}$ as follows:
$$
H_{r}=\langle(L_f(\mu_{1}),-\alpha),(L_f(\mu_{2}),-\alpha),\cdots,(L_f(\mu_{r-1}),-\alpha),(L_f(\mu_{r}),-\alpha)\rangle,
$$
then  $N(H_r)$ reaches its maximum
$$ N(H_r)=p^{2e-(r+1)}\Big(1+(p-1)p^{r-1-\frac{e}{2}}\Big) = p^{2e-r-1} + (p-1)p^{\frac{3e-4}{2}}. $$
So, for $0\leq r\leq\frac{e}{2}$, the desired result is obtained by Lemma~\ref{lem:d_r:2} and \eqref{eq:d_r:3}.

\textbf{Case 3}:\ $e (e\geq 3)$ is even and $\varepsilon_{f}=-(-1)^{\frac{e(p-1)}{4}}$. In this case, $e_0 = \frac{e-2}{2}$ and $e-e_0 = \frac{e}{2}+1$, that is, $0\leq r\leq \frac{e}{2}+1$.

Suppose $H_r$ is an $r$-dimensional subspace of $\F_q^2$ and $\alpha\in  \Prj_{2}(H_{r})$.
By Lemma~\ref{lem:7} and Lemma~\ref{lem:d_r:2}, we have
\begin{align}
N(H_r)=p^{2e-(r+1)}\Big(1-p^{-\frac{e}{2}}\sum_{(y_1,-\alpha)\in H_{r}}v(f(x_{y_1})\Big).   \nonumber
\end{align}

For $1\leq r\leq\frac{e}{2}$, taking an element $a\in\F_p^*$, then, by Lemma~\ref{lem:1} (or \cite[Theorem 6.26]{16LN97}), we know that the length and the dimension of $C_{\overline{D}_{a}}$ in \eqref{defcode0} are $p^{e-1}+p^{\frac{e-2}{2}}$ and $e$, respectively.
Combining formula \eqref{eq:d_r:2} with Lemma~\ref{lem:4}, we have
\begin{align*}
    d_{e-r}(C_{\overline{D}_{a}})&=p^{e-1}+p^{\frac{e-2}{2}}-\max\{|\overline{D}_{a} \cap H|: H \in [\F_q,r]_{p}\}\\
    &=p^{e-1}-2p^{e-(e-r)-1}+p^{\frac{e-2}{2}},
\end{align*}
which follows that
$$
\max\{|\overline{D}_{a} \cap H|: H \in [\F_q,r]_{p}\}=2p^{r-1}.
$$
Thus, there exists an $r$-dimensional subspace $J_{r}$ of $\F_q$ such that $|\overline{D}_{a} \cap J_r| = 2p^{r-1}$. By Lemma~\ref{lem:1}, we know that $R_{J_{r}}=1$ and $\epsilon_{J_r}=\overline{\eta}(a)$, which concludes that there exists an $(r-1)$-dimensional subspace $J_{r-1}$ of $J_{r}$ satisfying $f(J_{r-1})=0$ and
$\overline{\eta}(f(x))=(-1)^{\frac{(e-1)(p-1)}{4}}\epsilon_{f}$, for each $x\in J_{r}\setminus J_{r-1}$.
Let $\alpha_{1},\alpha_{2},\cdots,\alpha_{r-1}$ be an $\mathbb{F}_{p}$-basis of $J_{r-1}$. Take an element $\alpha_{r}\in J_{r}\setminus J_{r-1}$ and
set
$$
\mu_{1}=\alpha_{1}+\alpha_{r},\mu_{2}=\alpha_{2}+\alpha_{r},\cdots,\mu_{r-1}=\alpha_{r-1}+\alpha_{r},\mu_{r}=\alpha_{r}.
$$
Then $\mu_{1},\mu_{2},\cdots,\mu_{r-1},\mu_{r}$ is an $\mathbb{F}_{p}$-basis of $J_{r}$. Take $H_{r}$ as follows:
$$
H_{r}=\langle(L_f(\mu_{1}),-\alpha),(L_f(\mu_{2}),-\alpha),\cdots,(L_f(\mu_{r-1}),-\alpha),(L_f(\mu_{r}),-\alpha)\rangle,
$$
then $N(H_r)$ reaches its maximum

$$ N(H_r)=p^{2e-(r+1)}\Big(1+p^{r-1-\frac{e}{2}}\Big) = p^{2e-r-1} + p^{\frac{3e-4}{2}}. $$
So, , for $0\leq r\leq\frac{e}{2}$, the desired result is obtained by Lemma~\ref{lem:d_r:2} and \eqref{eq:d_r:3}.

For $r=\frac{e}{2}+1$, taking an element $a\in\F_p^*$, then by Lemma~\ref{lem:4}, we have
\begin{align*}
    d_{e-r}(C_{\overline{D}_{a}})&=p^{e-1}+p^{\frac{e-2}{2}}-\max\{|\overline{D}_{a} \cap H|: H \in [\F_q,r]_{p}\}\\
    &=p^{e-1}-p^{e-(e-r)-1},
\end{align*}
which follows that
$$
\max\{|\overline{D}_{a} \cap H|: H \in [\mathbb{F}_{q},\frac{e}{2}+1]_{p}\}=p^{\frac{e}{2}}+p^{\frac{e-2}{2}}.
$$
Thus, there exists an $r$-dimensional subspace $J_{r}$ of $\F_q$ such that $|\overline{D}_{a} \cap J_r| = p^{\frac{e-2}{2}}+p^{\frac{e}{2}}$.
By the proof of Lemma~\ref{lem:4} (or \cite[Theorem 1]{19LF17}), we know that $R_{J_r}=2$ and $\epsilon_{J_r}=\bar{\eta}((-1)^{\frac{e}{2}-1})\epsilon_f$.
So, there exists a $(r-2)$-dimensional subspace $J_{r-2}$ of $ J_{r}$ satisfying $f(J_{r-2})=0$.
Let $\alpha_{1},\alpha_{2},\cdots,\alpha_{r-2}$ be an $\mathbb{F}_{p}$-basis of $J_{r-2}$.
Choose two elements $\gamma_{1}, \gamma_{2}\in J_{r}\setminus J_{r-2}$
such that $\alpha_{1},\cdots,\alpha_{r-2},\gamma_{1}, \gamma_{2}$ is an $\mathbb{F}_{p}$-basis of $J_{r}$.
Set
$$
\mu_{1}=\alpha_{1}+\gamma_{2},\cdots,\mu_{r-2}=\alpha_{r-2}+\gamma_{2},\mu_{r-1}=\gamma_{1}+\gamma_{2},\mu_{r}=\gamma_{2}.
$$
Then $\mu_{1},\mu_{2},\cdots,\mu_{r-1},\mu_{r}$ is an $\mathbb{F}_{p}$-basis of $J_{r}$. Take $H_{r}$ as follows:
$$
H_{r}=\langle(L_f(\mu_{1}),-\alpha),(L_f(\mu_{2}),-\alpha),\cdots,(L_f(\mu_{r-1}),-\alpha),(L_f(\mu_{r}),-\alpha)\rangle,
$$
then $N(H_r)$ reaches its maximum

$$ N(H_r)=p^{2e-(r+1)}\Big(1+p^{r-1-\frac{e}{2}}\Big) = 2p^{\frac{3e-4}{2}}=p^{2e-r-1} + p^{\frac{3e-4}{2}}. $$
So, for $ r=\frac{e}{2}+1$, the desired result is obtained.

Thus, we complete the proof.
\end{proof}

\section{Concluding Remarks}

In this paper, we constructed a family of $3$-weight linear codes and determined their weight distributions and weight hierarchies.
Here, we note that, in our case, the quadratic function $f$ defined in \eqref{eq:f} is non-degenerate.
If $f$ is degenerate, one can refer to \cite{LL20-0,19TXF17} and obtain similar results by using our strategy.

Let $w_{\min}$ and $w_{\max}$ denote the minimum and maximum nonzero weight of the linear code $C_{\D_{\alpha}}$, respectively.
If the code $e\geq3$, then it can be easily checked that
$$
 \frac{w_{\min}}{w_{\max}}> \frac{p-1}{p}.
$$
By the results in \cite{21YD06}, most of the codes we construct
are suitable for constructing secret sharing schemes with interesting properties.

\section*{Acknowledgements}

The research was supported by Anhui Provincial Natural Science Foundation No. 1908085MA02 and the National Science Foundation of China Grant No.12001312 and No. 11701001.



\begin{thebibliography}{99}
\baselineskip  4mm
%
%


\bibitem{1AK17}
J.  Ahn, D. Ka, C. Li, Complete weight enumerators of a class of linear codes. Des. Codes Cryptogr. 83(1), 83-99(2017).

\bibitem{2B19}
P. Beelen, A note on the generalized Hamming weights of Reed-Muller codes. AAECC 30, 233-242, (2019). doi:10.1007/s00200-018-0369-8.

\bibitem{1BL14}
M. Bras-Amor\'{o}s, K. Lee, and A. Vico-Oton, New lower bounds on the
generalized Hamming weights of AG codes, IEEE Trans. Inf. Theory
60(10), 5930-5937(2014).

\bibitem{4CG84}
A. R. Calderbank, J. M. Goethals, Three-weight codes and association schemes. Philips J.Res. 39, 143-152(1984).

\bibitem{5CK86}
A. R. Calderbank, W. M. Kantor, The geometry of two-weight codes. Bull. Lond. Math. Soc. 18, 97-122(1986).

\bibitem{3CC97}
J. Cheng and C. Chao, On generalized Hamming weights of binary
primitive BCH codes with minimum distance one less than a power of
two, IEEE Trans. Inf. Theory 43(1), 294-298(1997).

\bibitem{5CK12}
S. T. Choi, J. Y. Kim, J. S. No, H. Chung, Weight distribution of some cyclic codes. In: Proceedings
of the International Symposium on Information Theory, 2911¨C2913(2012).

\bibitem{5DJ15}
C. Ding, Linear codes from some 2-designs, IEEE Trans. Inf. Theory 61(6), 3265-3275(2015).

\bibitem{25DD15}
K. Ding, C. Ding, A class of two-weight and three-weight codes and their applications in secret sharing, IEEE Trans. Inf. Theory
61(11), 5835-5842(2015).

\bibitem{6DD14}
K. Ding, C. Ding, Bianry linear codes with three weights, IEEE Communication Letters.
18(11), 1879-1882(2014).

\bibitem{8DH07}
C. Ding, T. Helleseth, T. Kl{\o}ve,  X. Wang, A generic construction of Cartesian authen- tication codes.
IEEE Trans. Inf. Theory 53(6), 2229-2235(2007).

\bibitem{9DL16}
C. Ding, C. Li, N. Li, Z. Zhou, Three-weight cyclic codes and their weight distributions. Discret.
Math. 339(2), 415-427(2016).

\bibitem{DLM11}
C. Ding, Y. Liu, C. Ma, L. Zeng, The weight distributions of the duals of cyclic codes with two
zeros. IEEE Trans. Inf. Theory 57(12), 8000¨C8006(2011).

\bibitem{11HP98}
P. Heijnen and R. Pellikaan, Generalized Hamming weights of
q-ary Reed-Muller codes, IEEE Trans. Inf. Theory 44(1), 181-196(1998).

\bibitem{9HK92}
T. Helleseth, T. Kl{\o}ve and O. Ytrehus, Codes, weight hierarchies, and chains. Singapore ICCS/ISITA '92. 'Communications on the Move' IEEE(1992).


\bibitem{12HP03}
W. C. Huffman and V. Pless,  Fundamentals of error-correcting codes,
Cambridge University Press, Cambridge (2003).

\bibitem{IR90}
K. Ireland, M. Rosen, A classical introduction to modern number theory. In: Graduate Texts in Mathematics, vol. 84, 2nd edn. Springer, New York (1990).

\bibitem{14JL97}
H. Janwa and A. K. Lal, On the generalized Hamming weights of cyclic
codes, IEEE Trans. Inf. Theory 43(1), 299-308(1997).

\bibitem{13JF17}
G. Jian, R. Feng and H. Wu, Generalized Hamming weights of
three classes of linear codes, Finite Fields Th. App. 45, 341-354(2017).

\bibitem{JL19}
G. Jian, C. Lin and R. Feng, Two-weight and three-weight linear codes based on Weil sums, Finite Fields and Their Applications 57, 92-107(2019).

\bibitem{16K11}
T. Kl{\o}ve, Codes for Error Detection. World Scientific, Hackensack (2007).

\bibitem{16KY19}
X. Kong, S. Yang, Complete weight enumerators of a class of linear codes with two or three weights. Discrete Math. 342(11), 3166-3176(2019).

\bibitem{18LF17}
F. Li, A class of cyclotomic linear codes and their generalized Hamming weights, Applicable Algebra in
Engineering, Communication and Computing 29, 501-511(2018).

\bibitem{19LF17}
F. Li, Weight hierarchy of a class of linear codes relating to non-degenerate quadratic forms, IEEE Trans. Inf. Theory, (2020). doi:10.1109/TIT.2020.3021730.

\bibitem{LBY19}
C. Li, S. Bae, S. Yang, Some results on two-weight and three-weight linear codes, Adv. Math. Commun. 13(1) 195- 211(2019). doi: 10.3934/amc.2019013.

\bibitem{LL20-0}
F. Li, X. M. Li, Weight hierarchies of a family of linear codes associated with degenerate quadratic forms, (2018).	arXiv:1804.06866.

\bibitem{LL20-1}
F. Li, X. M. Li, Weight distributions of several families of 3-weight binary linear codes,(2020). arXiv:2002.01853v2.

\bibitem{LL20}
F. Li, X. M. Li, Weight distributions and weight hierarchies of two classes of binary linear codes, (2020). arXiv:2001.08875v2.

\bibitem{16LN97}
 R. Lidl, H. Niederreiter,  Finite fields, Cambridge University Press, New York (1997).

\bibitem{26TX17}
H. Liu, Q. Liao,  Several classes of linear codes with a few weights from defining sets over $F_p+uF_p$, Des. Codes Cryptogr. 87(1), 15-29(2017).

\bibitem{LL18}
Y. W. Liu and Z. H. Liu, On some classes of codes with a few weights, Adv.
Math. Commun. 12(2), 415-428(2018).

\bibitem{19LW19}
Z., Liu, J. Wang, Notes on generalized Hamming weights of some classes of binary codes. Cryptogr. Commun., (2019). doi:10.1007/s12095-019-00404-3.

\bibitem{LY17}
C. Li, Q. Yue, F. Fu, A construction of several classes of two-weight and three-weight linear codes.
Applicable Algebra in Engineering Communication and Computing 28(1), 1-20(2018).


\bibitem{LC17}
G. Luo, X. Cao, Five classes of optimal two-weight linear codes. Cryptography and Communications 10(5), 1-17(2017).

\bibitem{19LF08}
J. Luo, K. Feng,  On the weight distribution of two classes of cyclic codes. IEEE Trans. Inf. Theory
54(12), 5332¨C5344(2008).

\bibitem{O17}
O. Olmez, A link between combinatorial designs and three-weight linear codes. Designs Codes and Cryptography 86(9), 1-17(2017).

\bibitem{19TXF17}
C. Tang, C. Xiang, K. Feng, Linear codes with few weights from inhomogeneous quadratic functions. Des. Codes Cryptogr. 83(3), 691-714(2017).

\bibitem{20WJ91}
V. K. Wei,  Generalized Hamming weights for linear codes, IEEE Trans. Inf. Theory 37(5), 1412-1418(1991).

\bibitem{27WZ94}
Z. Wan, The weight hierarchies of the projective codes
from nondegenerate quadrics, Des. Codes Cryptogr. 4(4), 283-300(1994).

\bibitem{21XL16}
M. Xiong, S. Li, and G. Ge. The weight hierarchy of some reducible cyclic
codes. IEEE Trans. Inf. Theory 62(7), 4071-4080(2016).

\bibitem{22YL15}
M. Yang, J. Li, K. Feng and D. Lin, Generalized Hamming weights of
irreducible cyclic codes, IEEE Trans. Inf. Theory 61(9), 4905-4913(2015).

\bibitem{23YY17}
S. Yang, Z. Yao, C. Zhao,  The weight distributions of two classes of p-ary cyclic codes with few weights. Finite Fields Th. App. 44, 76-91(2017).


\bibitem{21YD06}
J. Yuan, C. S. Ding, Secret sharing schemes from three classes of linear codes. IEEE Trans. Inf. Theory 52(1), 206-212(2006).

\bibitem{24ZL16}
Z. Zhou, N. Li, C. Fan and T. Helleseth, Linear codes with two or three weights from quadratic bent functions, Des. Codes Cryptogr. 81(2), 283-295(2016).
\end{thebibliography}


\end{document}